\documentclass[superscriptaddress,twocolumn]{revtex4-1}

\usepackage{amsmath,amsfonts,amssymb,amsthm,epsfig,times,bbm}
\usepackage[usenames,dvipsnames]{color}
\usepackage{graphicx}
\usepackage{multirow}
\usepackage{mathtools}

\mathtoolsset{showonlyrefs=true}

\definecolor{jens}{rgb}{.2,0.7,.9}

\definecolor{mathis}{rgb}{.9,.0,.9}

\definecolor{albert}{rgb}{0.8,0.1,0.89}

\newcommand{\cc}{\mathbbm{C}}

\newcommand{\rr}{\mathbbm{R}}
\newtheorem{thm}{Theorem}
\newtheorem{lm}{Lemma}
\newtheorem{df}[lm]{Definition}
\newtheorem{dfrestated}{Definition}
\newtheorem{dflm}[lm]{Definition \& Lemma}
\newtheorem*{lm*}{Lemma}

\newtheorem*{prop*}{Proposition}

\newtheorem{co}[thm]{Corollary}
\newtheorem*{co*}{Corollary}

\newcommand{\ket}[1]{\left \vert #1 \right \rangle}
\newcommand{\bra}[1]{\left \langle #1 \right \vert}
\newcommand{\ketbra}[2]{\left \vert #1 \right \rangle \! \!\left \langle #2 \right \vert}
\newcommand{\braket}[2]{\langle #1 \vert #2 \rangle}


\newcommand\Abs[1]{\ensuremath{\left| #1\right|}}

\def\idty{\mathbbm{1}}
\DeclareMathOperator{\vspan}{span}
\newcommand\intd{\mathrm d}
\newcommand\dist[1][(A,B)]{\mathrm{d}#1}
\newcommand\idos{\Theta}

\begin{document}

\title{Many-body localisation implies that eigenvectors are matrix-product states}
\author{M.\ Friesdorf}
\affiliation{Dahlem Center for Complex Quantum Systems, Freie Universit{\"a}t Berlin, 14195 Berlin, Germany}
\author{A.\ H.\ Werner}
\affiliation{Dahlem Center for Complex Quantum Systems, Freie Universit{\"a}t Berlin, 14195 Berlin, Germany}
\author{W.\ Brown}
\affiliation{Computer Science Department, University College London, London WC1E 6BT}
\author{V.\ B.\ Scholz}
\affiliation{Institute for Theoretical Physics, ETH Zurich, Wolfgang-Pauli-Str. 27, 8093 Zurich, Switzerland}
\author{J.\ Eisert}
\affiliation{Dahlem Center for Complex Quantum Systems, Freie Universit{\"a}t Berlin, 14195 Berlin, Germany}

\date{\today}

\begin{abstract}
The phenomenon of many-body localisation received  a lot of attention recently, both for its implications in condensed-matter physics
of allowing systems to be an insulator even at non-zero temperature as well as in the context of the foundations of quantum statistical mechanics,
providing examples of systems showing the absence of thermalisation following out-of-equilibrium dynamics.
In this work, we establish a novel link between dynamical properties
-- a vanishing group velocity and the absence of transport -- with entanglement properties
of individual eigenvectors.
Using Lieb-Robinson bounds and filter functions, we prove rigorously under simple assumptions on the spectrum
 that if a system shows strong dynamical localisation,
 all of its many-body eigenvectors have clustering correlations.
 In one dimension this implies directly an entanglement area law,
 hence the eigenvectors can be approximated by matrix-product states.
We also show this statement for parts of the spectrum, allowing for the existence of a mobility edge above which transport is possible.
\end{abstract}
\maketitle

The concept of disorder induced localisation has been introduced in the seminal work by Anderson \cite{Anderson}
who captured the mechanism responsible for the
absence of diffusion of waves in disordered media.
This mechanism is specifically well understood in the single-particle case,
where one can show that in the presence of a suitable random potential,
all eigenfunctions are exponentially localised \cite{Anderson}.
In addition to this spectral characterisation of localisation there is a notion of dynamical localisation,
which requires that the transition amplitudes between lattice sites decay exponentially \cite{klein2007mulstiscale,stolz2011}.

Naturally, there is a great interest in extending these results to the many-body setting \cite{HuseReview, Bauer}.
In the case of integrable systems that can be mapped to free fermions, such as the XY chain,
results on single particle localisation can be applied directly \cite{Burrell,Sims_zeroLR}.
A far more intricate situation arises in interacting systems.
Such {\it many-body localisation} \cite{Basko,Basko2} has received an enormous attention recently.
In terms of condensed-matter physics, this phenomenon allows for systems to remain
an insulator even at non-zero temperature \cite{Aleiner_Disorder}, in principle even at infinite temperature \cite{Oganesyan}.
In the {\it foundations of statistical mechanics}, such many-body localised systems provide examples of systems that fail to thermalise.
When pushed out of equilibrium, signatures of the initial condition will locally be measurable even after long times,
in contradiction to what one might expect from quantum statistical mechanics \cite{PalHuse,HuseReview,Integrable}.

Despite great efforts to approach the phenomenon of many-body localisation, many aspects are not fully understood and
a comprehensive definition is still lacking.
Similar to the case of single-particle Anderson localisation, there are two complementary approaches to capture the phenomenon.
On the one hand probes involving real-time dynamics \cite{1404.5216,Pollmann_unbounded} have been discussed,
showing excitations ``getting stuck'', or seeing suitable signatures in density-auto-correlation functions \cite{Prozen_localisation},
leading to a dynamic reading of the phenomena.
On the other hand, statistics of energy levels has been considered as an indicator \cite{Oganesyan} as well as a lack of entanglement in the
eigenbasis \cite{Bauer} and a resulting violation of the eigenstate thermalisation hypothesis (ETH) \cite{Srednicki94,Deutsch91,Rigol_etal08}.
In such a static description of many-body localisation, it has for example been suggested to define
many-body localised states as those states that can be prepared from product states or Slater determinants of single-particle localised states with a finite-depth local unitary transformation.
Accordingly, a many-body localised phase is one where most eigenvectors have this property \cite{Bauer}.

It seems fair to say that the connection between these static and dynamic readings is still unclear.
In this work, we make a substantial first step in relating these approaches:
If the evolution of an interacting many-body system
is dynamically localising (and its spectrum is generic),
then all eigenvectors have exponentially clustering correlations in general and,
in one dimension, satisfy an area law and can be efficiently written as a matrix-product state (MPS)
\cite{FCS,MPSSurvey} with a low bond dimension.
Our proof uses advanced mathematical tools developed for the analysis of interacting many-body systems,
such as energy filtering and does not rely on the precise details of the considered model. We show that
short ranged correlations in individual energy eigenstates necessarily follow from absence of transport
in localising models.

{\it Setting.}
For simplicity, we will state all our results in the language of spin systems of local dimension $d$,
but they can equivalently be derived for
fermionic systems.
We consider local Hamiltonians
$ H = \sum_{j} h_j$,
where each interaction term $h_j$ is supported on finitely many sites of a lattice with $N$ vertices,
and the physical system associated with each site is finite-dimensional.
We make for most of the argument no assumptions concerning the dimensionality or the kind of the lattice.
In what follows, we will consider observables $A$ and
$B$ that are supported on finitely many sites on the lattice. There is a natural distance $\dist$
in the lattice related to the length of the shortest path of interactions connecting the supports of $A$ and $B$.
For convenience of notation, we will choose $\|A\| = 1 = \|B\|$ from here on, which does not restrict generality
\cite{HastingsKoma06,Decay2,Quench,BravyiHastingsVerstraete06,NachtergaeleOgataSims06}.
For any local Hamiltonian of this sort there is an upper bound $v>0$ to the largest group velocity,
referred to as Lieb-Robinson velocity.
There are several possible formulations of such a bound (see Appendix \ref{LR} for a detailed discussion).
One convenient and common way is to compare the time-evolution of a local observable
$A(t) = e^{i t H} A e^{-i t H}$
under the full Hamiltonian $H$ with its time-evolution under a truncated Hamiltonian $H_A^l$
that only includes  interactions $h_j$ contained in a region of distance no more than $l$ from
the support of $A$.
The Lieb-Robinson theorem then states that there exists a constant $c>0$ and a velocity $v>0$ such that
\begin{align}
  \label{eq:LR}
  \|A(t) - e^{i t H_A^l} A e^{-i t H_A^l}\|
  \leq c e^{- \mu (l - v t)}
\end{align}
is true for all times $t\geq 0$.
Such a bound limits the \emph{speed of information propagation}:
In any such lattice with short-ranged interactions, all interactions are causal in this sense.
A related, but in the case of $v=0$ weaker estimate is given by the following commutator bound
(see also \ref{LR})
\begin{align}
  \| [A , B(t) ] \| & \leq C e^{- \mu (\dist - v t)} .
  \label{eq:LRcomm}
\end{align}

Our results will rely on two generic and natural assumptions on the spectrum of the Hamiltonian.
To state those, we express the many-body Hamiltonian $H$ in its eigenbasis as
\begin{equation}
  H = \sum_k E_k \ketbra{k}{k}.
\end{equation}
\begin{itemize}
  \item \textit{Non-degenerate energies (\textbf{AI}):}
    The energies of the full Hamiltonian are assumed to be non-degenerate.
    The smallest gap between these energies will be called $\gamma$.
  \item \textit{Locally independent gaps (\textbf{AII}):}
    The energies of reduced Hamiltonians $H_A$ and $H_B$ which include all interactions inside
    rectangular regions $A$ and $B$ respectively, are assumed to be non-degenerate
    when viewing them as operators on their respective
    truncated Hilbert spaces $\mathcal{H}_A$ and $\mathcal{H}_B$.
    The smallest gap will be called $\tilde{\gamma}$.
    Moreover, the gaps of these Hamiltonians need to be locally independent with respect to
    each other, in the sense that
    \begin{align}
      E_a - E_{a'} = E_b - E_{b'} \Rightarrow a=a', b= b' ,
    \end{align}
    where $a,a'$ and $b,b'$ label the eigenvalues of $H_A$ and $H_B$ respectively.
    The smallest difference of these gaps will be called $\eta$.
  \item \textit{Non-symmetric gaps (\textbf{AIII}):}
    For all eigenvalues $E_k$ of the full Hamiltonian the spectrum is
    no-where symmetric in the sense that
    \begin{align}
      \min_{r \neq s} \bigl| |E_r - E_k| - |E_s - E_k| \bigl| \geq \zeta > 0 \; \forall k.
    \end{align}
\end{itemize}
Similar reasonable conditions have already been considered in the context of equilibration of closed systems
and are assumed to hold when a small amount of random noise is added to a local Hamiltonian \cite{Linden_etal09}.
Especially in the context of localising systems which are typically based on large random terms in the Hamiltonian,
these assumptions are very natural indeed.
Our assumptions are, however, considerably weaker than assuming fully non-degenerate energy gaps \cite{Linden_etal09}.
Another quantity that will be important later on is the number of states up to energy $E$ given by
$\idos(E) := \sum_{{l, E_l \leq E}} 1 $.

{\it Notions of dynamical localisation.}
One defining feature of localisation is that
the system shows no transport of information.
There are many possible readings of this.
The first and strongest is that the time evolution can be truncated to a finite region independent of time
and is directly connected to an absence of thermalisation in the model (see Appendix \ref{absence}).

\begin{df}[Strong dynamical localisation]
  \label{df_dynamical_localisation}
  A Hamiltonian $H$ exhibits strong dynamical localisation iff its time evolution satisfies
  \begin{align}\label{eq:str_dyn_loc}
    \|A(t) - e^{i t H_A^l} A e^{-i t H_A^l}\|
    \leq  c_\mathrm{loc} e^{- \mu l} \; ,
  \end{align}
  for a suitable $\mu>0$,
  where $A$ is an arbitrary local observable, $c_\mathrm{loc}>0$ a constant independent of the
  system size and $H_A^l$ denotes a Hamiltonian which includes all interactions
  contained in a region of distance no more than $l$ from the support of $A$.
\end{df}

This corresponds to setting the Lieb-Robinson velocity $v$ to zero in Eq.\ \eqref{eq:LRcomm}.
A potential candidate for such a model is given by free fermionic systems with strong local disorder.
There, it is known that transport is strongly suppressed and it is still being debated to what extent they satisfy
zero velocity Lieb-Robinson bounds \cite{Burrell,HamzaSimsStolz12}.
This definition also implies that information cannot be transported by encoding it locally in the chosen Hamiltonian
(see Appendix \ref{LR} for details).
As such, it does not allow for a growth of entanglement
entropies logarithmically in time, a phenomenon
that has been numerically observed in spin chains with random onsite noise \cite{Pollmann_unbounded}.

In a second step, we relax our dynamical assumption
to a weaker variant of zero velocity Lieb-Robinson bounds and restricting it to a suitable subspace.
It is natural to assume that the transport properties of an electronic system are closely connected to the
energy available in the system. In fact, often the notion of a \emph{mobility edge} is introduced
\cite{1001.5280,HamzaSimsStolz12}.
The intuition is that for all energies below this mobility edge, transport is fully blocked,
while it might be possible for higher energies.
The following definition is making this precise and allows for transport in highly
excited sectors.
\begin{df}[Mobility edge]
  \label{def:subsp_loc}
  A Hamiltonian $H$ is said to have a mobility edge at energy $E_{\mathrm{mob}}$
  iff its time evolution satisfies for all times $t$,
    $\forall \rho\in\vspan\{\ketbra{l}{k}: \; E_l,E_k \leq E_{\mathrm{mob}}\}$
    \begin{align}
      \label{eq:zeroLRcomm}
      | \mathrm{tr} \left(\rho [A(t),B]\right)\vert
      \leq \min(t,1) c_\mathrm{mob} e^{-\mu \dist}\; ,
    \end{align}
    with constants $c_\mathrm{mob}, \mu>0$ independent of $t$.
    That is, all transport is suppressed for states supported only on the low-energy sector
    below energy $E_{\mathrm{mob}}$.
\end{df}

This definition relies on a rather weak version of zero velocity Lieb-Robinson bounds (see Appendix \ref{LR}).
While our results will be applicable to all eigenvectors below the mobility edge,
they can naturally also be used to capture the ground state,
yielding an extension of the previous results \cite{HamzaSimsStolz12}.

In experiments, a natural setting for dynamical localisation
is one where a system is excited locally.
For example, in case of a ground state vector $\ket{0}$,
this corresponds to the action of a unitary operator $U$ with local support
$ \ket{\psi} = U \ket{0} = e^{- i G} \ket{0} $
where $G$ denotes its generator.
In this setting, suppression of transport means that the excitation cannot 
be measured at far distances, even for long times, in the sense that
\begin{align}
  |\bra{0}A(t)-e^{i G} A(t) e^{-i G} \ket{0}| \leq \min(t,1)\Vert{G}\Vert C e^{-\mu d(A,U)},
\end{align}
for arbitrary observables $A$.
Here $C, \mu > 0$ are constants independent of the time $t$ and the system size.
If excitations are stuck in the above sense in an entire certain energy subspace,
then this implies our zero velocity Lieb-Robinson estimate in Eq.\ \eqref{eq:zeroLRcomm},
thus relating Definition \ref{def:subsp_loc} to a quantity that can in principle be experimentally measured (see Appendix \ref{LR}).

{\it Main result.}
We are now in the position to state our main result.
It provides a clear link between the dynamical properties
of the system and the static correlation behaviour of the corresponding eigenvectors and
can be applied to quantum systems of arbitrary dimension. In one dimension it directly implies
an area law and the existence of an approximating MPS (see Corollary \ref{mps_theorem}).

\begin{thm}[Clustering of correlations of eigenvectors]
  \label{main_theorem}
  The dynamical properties of a local Hamiltonian imply clustering correlations
  of its eigenvectors in the following way.
  \begin{itemize}
    \item[\textbf{a}]
      If the Hamiltonian shows strong dynamical localisation
      and its spectrum fulfils assumptions \textbf{AI} and \textbf{AII}
      then all its eigenvectors $\ket k$ have exponentially clustering correlations, i.e. 
      \begin{align}
        |\bra{k} A B \ket{k} - \bra{k} A \ketbra{k}{k} B \ket{k}|
        \leq 4 c_\mathrm{loc} e^{-{\mu} \dist /2} .
      \end{align}
    \item[\textbf{b}]
      If the Hamiltonian has a mobility edge at energy $E_\mathrm{mob}$
      and its spectrum fulfils assumptions \textbf{AI} and \textbf{AIII},
      then all eigenvectors $\ket{k}$  up to that energy $E_\mathrm{mob}$ cluster exponentially,
      in the sense that for all $\kappa>0$
      \begin{eqnarray}
        \nonumber
        &&\Abs{\bra{k} A B \ket{k} - \bra{k} A\ketbra{k}{k} B \ket{k}}\\
        &\leq&
        \left(12 \pi \idos(E_k + \kappa) c_\mathrm{mob} + \ln \frac{\pi \mu \dist e^{4 + 2 \pi}}{\kappa^2}
        \right) \frac{e^{-\mu \dist}}{2 \pi} \;,
      \end{eqnarray}
      where $\idos(E)$ is the number of eigenstates up to energy $E$ and $\kappa$ can be chosen arbitrarily to optimise the bound.
  \end{itemize}
\end{thm}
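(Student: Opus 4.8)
The plan is to convert the static connected correlator into real-time dynamics through an energy filter, so that the Lieb--Robinson and localisation estimates can be applied. First I would rewrite the quantity of interest as $\bra{k}AB\ket{k}-\bra{k}A\ketbra{k}{k}B\ket{k}=\bra{k}A\,Q_k\,B\ket{k}$ with $Q_k=\id-\ketbra{k}{k}$, and represent the projector as $Q_k=g(H-E_k)$ for a filter function $g$ with $g(0)=0$ and $g(\omega)=1$ whenever $|\omega|\geq\gamma$; assumption \textbf{AI} guarantees that no eigenvalue difference lies in the open interval $(0,\gamma)$, so $g(H-E_k)$ reproduces $Q_k$ exactly. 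Writing $g$ through its Fourier transform $w$ then turns the correlator into a time integral $\int\intd t\, w(-t)\,\bra{k}A\,B(t)\ket{k}$, i.e.\ into a dynamical two-point function weighted by a kernel whose decay is governed by the smoothness of $g$ near the origin.

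The mechanism that must carry the estimate is the following. For a ground state all remaining eigenvalues lie on one side of $E_k$, so $g$ can be supported on a single sign of frequency; the companion ordering $\bra{k}B(t)A\ket{k}$ then drops out and the correlator becomes an integral of the commutator $\bra{k}[A,B(t)]\ket{k}$, which the localisation estimate bounds directly. For a bulk eigenvector this is precisely what fails: the spectrum surrounds $E_k$ on both sides, the subtracted ordering no longer vanishes, and a naive even filter only retains information about $\bra{k}[A,B(t)]\ket{k}$, whose eigenstate expectation is purely imaginary and hence blind to the (real) correlator for Hermitian $A,B$. Breaking this degeneracy by separating the diagonal $\omega=0$ part from the two-sided off-diagonal contribution is the role of the spectral assumptions and is the main obstacle of the proof.

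For part \textbf{a} I would use strong dynamical localisation to truncate the dynamics to a local region: replacing $B(t)$ by $e^{itH_B^m}Be^{-itH_B^m}$ costs only $c_\mathrm{loc}e^{-\mu m}$ uniformly in $t$, and for $m<\dist$ the truncated operator commutes with $A$ at all times. This uniform-in-time commutator bound keeps the kernel integrable and lets me trade operator orderings up to an exponentially small error. Assumption \textbf{AII} enters to control the filter on the discrete spectra of the reduced Hamiltonians $H_A,H_B$: non-degeneracy isolates the diagonal component, while local independence of the gaps, $E_a-E_{a'}=E_b-E_{b'}\Rightarrow a=a',b=b'$, removes the spurious resonances that would otherwise contaminate the separation of the diagonal from the two-sided off-diagonal part. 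Balancing the truncation error against the filter width then fixes $m=\dist/2$ and yields the stated $4c_\mathrm{loc}e^{-\mu\dist/2}$.

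For part \textbf{b} only the weaker mobility-edge estimate is available, but crucially it bounds every matrix element $\Abs{\bra{l}[A(t),B]\ket{k}}\leq\min(|t|,1)c_\mathrm{mob}e^{-\mu\dist}$ in the low-energy sector, in particular the diagonal one for $\ket{k}$ below $E_\mathrm{mob}$. I would take a soft filter $g$ that vanishes at the origin and rises to $1$ on the scale $\kappa$, so that its time-domain kernel $w$ decays like $1/|t|$; integrating this tail against the $\min(|t|,1)$ commutator bound produces the logarithmic factor $\ln(\pi\mu\dist\, e^{4+2\pi}/\kappa^2)$. The eigenstates inside the soft window $0<|E_l-E_k|<\kappa$ are not filtered cleanly; their number is at most $\idos(E_k+\kappa)$, and bounding their pooled contribution by the localisation scale $c_\mathrm{mob}e^{-\mu\dist}$ gives the $12\pi\,\idos(E_k+\kappa)c_\mathrm{mob}$ term. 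Assumption \textbf{AIII} guarantees that the magnitudes $|E_l-E_k|$ are separated and that no eigenvalue is mirrored across $E_k$, which is exactly what lets the soft filter resolve the diagonal against both frequency signs. Optimising over $\kappa$ balances the window count against the logarithm and delivers the claimed bound. I expect the delicate point throughout to be the quantitative control of the filter's tail, so that the window correction, the commutator estimate, and the overall prefactor $e^{-\mu\dist}/(2\pi)$ assemble with the stated constants.
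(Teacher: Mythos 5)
Your overall strategy --- energy filtering, passage to the time domain, localisation to truncate the dynamics, spectral assumptions to isolate the diagonal --- is the right family of ideas, and you correctly diagnose the central obstruction (for a bulk eigenvector the one-sided ground-state trick of reducing everything to a commutator fails). But both parts have a genuine gap at the step where the work actually happens.

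For part \textbf{a}, the decomposition you start from is circular. Writing $Q_k=\id-\ketbra{k}{k}=g(H-E_k)$ with $g=1-h$, $h$ a narrow bump, and Fourier transforming gives a kernel $w=\delta-\hat h$; the $\delta$ part of $\int w(-t)\bra{k}AB(t)\ket{k}\,\intd t$ returns $\bra{k}AB\ket{k}$ and the bump part returns $\sum_l h(E_l-E_k)\bra{k}A\ketbra{l}{l}B\ket{k}\approx\bra{k}A\ketbra{k}{k}B\ket{k}$ by \textbf{AI} --- so you have merely restated the quantity to be bounded, and replacing $B(t)$ by its locally evolved version only reproduces the same two terms up to $c_\mathrm{loc}e^{-\mu m}$ without forcing them to be close. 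The paper's mechanism is different and essential: one filters the \emph{product} $AB$ with a narrow Gaussian, for which $\bra{k}I_\alpha^H(AB)\ket{k}=\bra{k}AB\ket{k}$ holds exactly; localisation replaces $H$ by $H_A+H_B$; and assumption \textbf{AII} is then used to prove a factorisation lemma, namely that both $I_\alpha^{H_A+H_B}(AB)$ and $I_\alpha^{H_A}(A)I_\alpha^{H_B}(B)$ collapse onto the same block-diagonal object $\sum_{a,b}\ketbra{a,b}{a,b}AB\ketbra{a,b}{a,b}$ up to errors $2^{O(N)}e^{-\xi^2/(4\alpha)}$, which are killed by taking $\alpha$ exponentially small. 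A filter inserted only between $A$ and $B$ cannot produce this factorisation, and without it there is no bridge from $\bra{k}AB\ket{k}$ to $\bra{k}A\ketbra{k}{k}B\ket{k}$.

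For part \textbf{b}, your architecture (a high-pass--type filter whose commutator term yields the logarithm, plus a counted set of low-lying states) matches the paper, but the decisive sub-step is asserted rather than proved: you need, for every $l\neq k$ with $E_l\leq E_k+\kappa$ individually, the bound $\Abs{\bra{k}A\ketbra{l}{l}B\ket{k}}\leq 2c_\mathrm{mob}e^{-\mu\dist}$. This is the content of the paper's Lemma~\ref{lem:exp_dec_matrix_elem}, proved with a \emph{modulated} Gaussian filter $f(t)\propto e^{it(E_k-E_l)}e^{-\alpha t^2}$ that singles out the one transition $k\to l$; assumptions \textbf{AI} and \textbf{AIII} guarantee (for $\alpha$ exponentially small) that no other transition survives the modulated filter, and only then does the commutator bound for $\ket{k}$ control the isolated matrix element. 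Your phrase ``bounding their pooled contribution by the localisation scale'' skips exactly this argument. Note also that the states requiring this individual treatment are \emph{all} states with $E_l\leq E_k+\kappa$ (the high-pass filter only suppresses one frequency sign), not just those in the window $\abs{E_l-E_k}<\kappa$, which is why the prefactor is the full $\idos(E_k+\kappa)$.
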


\begin{figure*}
  \includegraphics[width=\textwidth]{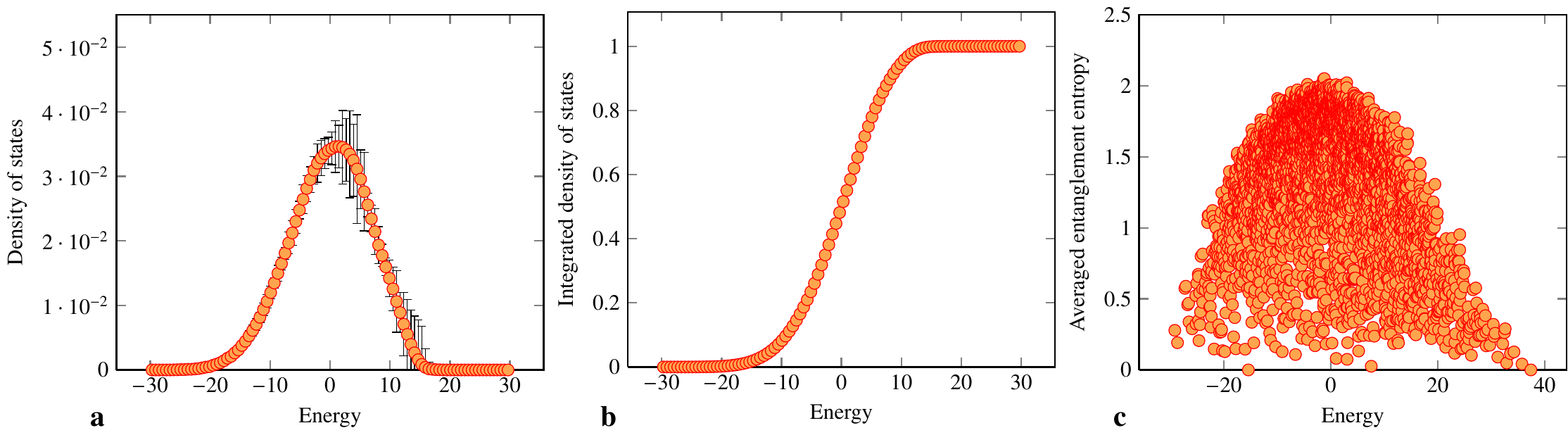}
  \caption[Density of states if Gaussian]{
    We look at a one-dimensional Heisenberg model with random on-site magnetic field of the form
    $H = \sum_{i} (X_i X_{i+1} + Y_i Y_{i+1} + Z_i Z_{i+1} + \mu_i Z_i)$,
    where $\mu_i$ is drawn uniformly from the interval $[-h,h]$.
    \textbf{a}
    Shown is the number of eigenstates for 14 sites as a function of the discretised energy for $h=1$
    averaged over 100 realisations and the variance of this averaging is indicated by vertical error bars.
    The resulting function is close to a Gaussian.
    \textbf{b}
    Here, the integrated density of states of this model with the same parameters is shown. The
    corresponding error bars are too small to appear. This quantity is related to the number of states $\Theta$
    by an exponential factor $2^N$. This plot indicates that indeed the number of states will have
    small tails and thus few states at low energies.
    \textbf{c}
    Depicted is the (von Neumann) entanglement entropy of the eigenvectors as a function of the energy.
    The results were obtained for the disorder strength $h=4$
    on a system of 12 sites and for each individual
    state an average over the different cuts through the 1D chain was taken.
    This plot corroborates the intuition that the entanglement entropy and thus the associated bond
    dimension of the corresponding MPS increase with the energy.
  }
  \label{fig:gaussian}
\end{figure*}

Part \textbf{b} of the theorem only requires Lieb-Robinson bounds on a subspace
and is thus perfectly compatible with a growth of entanglement entropies following quenches.
In fact, for the proof it is sufficient to assume Eq. \eqref{def:subsp_loc} on the level of the
individual eigenstates (see Appendix \ref{proof:part_b}).
It further allows to freely choose an energy cut-off $\kappa$ that
serves as an artificial energy gap and can be used to optimise the bound.
For typical local models, we expect the density of states to behave like a Gaussian,
an intuition that can be numerically tested for small systems (see Fig.\ \ref{fig:gaussian}),
and rigorously proved in a weak sense \cite{1403.1121}.
In this case the number
of states will behave like a low-order polynomial in the system size at energies
close to the ground state energy $E_k + \kappa$.
Thus fixing a $\kappa>0$ independent of the system size will lead to a pre-factor that scales like a low-order
polynomial in the system size where the order of the polynomial will increase as one moves to higher energies.
In the bulk of the spectrum, the number of states will grow exponentially with the system size,
thus rendering our bounds useless as one moves to high energies.
Interestingly, this feature of stronger correlations, associated with a larger entanglement in the state, at
higher energies seems to be shared by the Heisenberg chain with random on-site magnetic field (see Fig.\ \ref{fig:gaussian}).
Naturally our theorem can also be applied to the ground state where it
extends previous results \cite{HamzaSimsStolz12} and for example for
the case of an almost degenerate ground state will be a substantial improvement.
Naturally, our results can also be applied to the highly excited states at the other end of the
spectrum.

The proofs for both parts of the theorem rely on energy filtering techniques \cite{1001.5280,1102.0842}.
This is a versatile tool especially in the study of perturbation bounds for Hamiltonian systems and
allows to partly diagonalise a local observable in the energy eigenbasis, while still keeping some locality structure.
Energy filtering of an observable is defined as
\begin{align}
  I_f^H (A) = \int_{-\infty}^\infty dt f(t) A(t) .
\end{align}
Here $f:\rr\rightarrow \rr^+$ is a \emph{filter function} that can be used to interpolate between locality of the resulting
observable and the strength of the off-diagonal elements in the Hamiltonian basis
(see Appendix \ref{appendix:filtering}).

In order to show part \textbf{a} of the theorem, we will make use
of energy filterings of the local observables $A$ and $B$ with respect to the full Hamiltonian $H$
as well as with respect to local restrictions $H_A$ and $H_B$.
Those restrictions contain the support of the corresponding observable and are chosen as large as possible,
while still satisfying $[H_A, H_B]=0$.
The main idea is to choose a suitable (Gaussian) filter function $I_\alpha^H$
and use the gap assumptions \textbf{AI} and \textbf{AII} to show that the joint filter of the observables $AB$
decouples into energy filters for $A$ and $B$ separately. Then strong dynamical localisation can be used to
make these filters local.
Finally choosing the width of the energy filter $\alpha$ small enough, and in
particular exponentially small in the system size, allows to conclude the proof
(for details see Appendix \ref{part_a}).
The proof of part \textbf{b} of the theorem is more involved, relies on a high-pass filter
(see Appendix \ref{appendix:filtering}) and is contained in Appendix \ref{proof:part_b}.

{\it Implications on area laws and matrix-product states.}
In one dimension, the conclusions of our main theorem about the correlation behaviour of eigenvectors
can be turned into a statement about their entanglement structure.
It has been noted before that many-body localisation should be connected to eigenvectors
fulfilling an area law (see Conjecture 1 in Ref.\ \cite{Bauer}),
and eigenvectors being well approximated by matrix-product state vectors of the form
\begin{equation}
	\ket{{\mathrm{MPS}}} = \sum_{i_1,\dots, i_N=1}^D{\rm tr}(M_{i_1} M_{i_2}\dots M_{i_N}) |i_1,\dots, i_N\rangle,
\end{equation}
where $D$ is the bond dimension and $M_j\in \cc^{D\times D}$ for all $j$.
Our main theorem allows to rigorously prove this connection.
\begin{co}[Area laws and matrix-product states]
  \label{mps_theorem}
  An eigenvector $\ket{k}$ of a localising Hamiltonian can be approximated by an MPS with fidelity
  $|\braket{k}{\mathrm{MPS}}|\geq 1 - \epsilon$, where the bond dimension for a sufficiently large system 
  is given as follows.
  \begin{itemize}
    \item[\textbf{a}]
      If the Hamiltonian shows strong dynamical localisation
      and its spectrum fulfils assumptions \textbf{AI} and \textbf{AII},
      then the statement holds for all eigenvectors $\ket{k}$
      and, for some constant $C>0$, the approximation has a bond dimension 
      \begin{align}
        D = C \left ( {N}/{\epsilon} \right )^{\frac{16}{\mu \log_2 e}}.
      \end{align}
    \item[\textbf{b}]
      If the Hamiltonian has a mobility edge at energy $E_{\mathrm{mob}}$,
      and its spectrum fulfils assumptions \textbf{AI} and \textbf{AIII},
      then the statement holds for all eigenvectors below this energy $E_{\mathrm{mob}}$
      and the bond dimension is given by
      \begin{align}
        D = \mathrm{poly} \left ( \idos( E_k + \kappa), N \right ) ,
      \end{align}
      for any fixed $\kappa$ which enters in the precise form of the polynomial.
  \end{itemize}
\end{co}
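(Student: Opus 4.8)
The plan is to prove the corollary in two conceptual steps, each invoking an established structural result and feeding it the output of Theorem~\ref{main_theorem}. First I would use Theorem~\ref{main_theorem} to conclude that every eigenvector $\ket{k}$ (below $E_{\mathrm{mob}}$ in case~\textbf{b}) has a two-point correlation function that decays exponentially in the lattice distance between the supports of any two bounded local observables. In one spatial dimension this is exactly the hypothesis of the area-law theorem of Brand\~{a}o and Horodecki: a pure state on a chain whose covariance $|\langle AB\rangle-\langle A\rangle\langle B\rangle|$ decays exponentially with some correlation length $\xi$ satisfies an entanglement area law and, crucially for us, admits a uniform control of the truncated Schmidt spectrum (equivalently a bound on the R\'enyi entropies $S_\alpha$ with $\alpha<1$) across every cut, the bound being governed by $\xi$. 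From part~\textbf{a} the decay rate is $\mu/2$, so $\xi=O(1/\mu)$; from part~\textbf{b} the rate is $\mu$, but the prefactor carries the factor $\idos(E_k+\kappa)$, which will reappear in the final bound.

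Second, I would convert this Schmidt-spectrum control into an explicit MPS approximation. The standard truncation argument of Verstraete--Cirac (and Schuch et al.) shows that keeping the $D$ largest Schmidt coefficients at a single cut incurs an error governed by the tail of the spectrum, and a bound $S_\alpha\le c_\alpha$ with $\alpha<1$ turns this tail into a power law $\sum_{i>D}\lambda_i=O(D^{-(1-\alpha)/\alpha})$. Truncating simultaneously at all $N-1$ cuts and using subadditivity of the global fidelity error in the local truncation errors, it suffices to allot an error $\epsilon/(N-1)$ to each cut. Solving $D^{(1-\alpha)/\alpha}\sim N/\epsilon$ for $D$ gives a power-law bond dimension, and tracking the numerical constants that enter $\alpha$ and $c_\alpha$ through the Brand\~{a}o--Horodecki estimate produces the stated exponent $16/(\mu\log_2 e)$ in case~\textbf{a}. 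In case~\textbf{b} the identical cut-by-cut scheme applies, except that the correlation bound now also depends on $\idos(E_k+\kappa)$ through the prefactor in Theorem~\ref{main_theorem}\textbf{b}; propagating this dependence yields a bond dimension polynomial in $\idos(E_k+\kappa)$ and in $N$, with the degree fixed by the chosen cut-off $\kappa$.

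The hard part is the first step: the passage from clustering correlations to a R\'enyi-type area law in one dimension is a genuinely deep theorem, and it is what makes the whole argument go through. Since its proof is not elementary, the burden here is to verify that the estimate supplied by Theorem~\ref{main_theorem} is of exactly the form the area-law theorem requires -- a covariance bound for arbitrary bounded local observables, uniform over the two regions -- and that the resulting correlation length is finite and independent of the system size, which holds precisely because $c_{\mathrm{loc}}$, $c_{\mathrm{mob}}$ and $\mu$ are size-independent. The remainder is bookkeeping: in case~\textbf{b} one must check that the $\idos(E_k+\kappa)$ prefactor enters only polynomially, so that for fixed $\kappa$ and a density of states with small low-energy tails the bond dimension stays polynomial in $N$, as claimed.
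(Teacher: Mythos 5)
Your proposal is correct and follows essentially the same route as the paper: Theorem~\ref{main_theorem} supplies size-independent exponential clustering, the Brand\~{a}o--Horodecki theorem \cite{Brandao_Clustering} converts this into control of the Schmidt spectrum across every cut, and a cut-by-cut truncation with error budget $\epsilon/(3N)$ per cut yields the polynomial bond dimension. The only difference is cosmetic --- the paper phrases the spectral control via the smooth max entropy $H_{\mathrm{max}}^{\delta(l)}$ with $\delta(l)=2^{-l/(8\xi)}$ and $\xi = 2/(\mu\log_2 e)$, together with Lemma~14 of Ref.~\cite{Brandao_Clustering}, rather than via a R\'enyi-$\alpha<1$ bound, but both encode the same power-law tail of the Schmidt coefficients and lead to the same exponent $8\xi = 16/(\mu\log_2 e)$ in part \textbf{a} and the same polynomial dependence on $\idos(E_k+\kappa)$ and $N$ in part \textbf{b}.
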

The proof is a direct consequence of our main theorem together with the fact that exponential clustering 
in one dimension implies strong bounds on entanglement entropies 
between any bipartite cut of the chain \cite{Brandao_Clustering}. 
Using techniques from \cite{MPS_faithfully} this leads to an efficient MPS approximation with the above bounds on the 
bond dimension (see Appendix \ref{mps_proof} for details).

{\it Summary \& outlook.}
Despite considerable progress in understanding the effects of random potentials on quantum many-body systems,
a precise definition of the phenomenon of many-body localisation continues to be elusive.
Attempts to capture the phenomenology can largely be classified into two complementary approaches:
One of them puts characteristic properties of the eigenfunctions into the focus of attention
and asks for a lack of entanglement and a violation of the ETH.
The other one takes the suppression of transport as the basis, which seems closer to being experimentally testable.

In this work, we have established a clear link between these two approaches, by showing that dynamical localisation
implies that eigenvectors cluster exponentially. 
This result, together with the existence of an approximating MPS description in one dimension,
reminds of the definition of Ref.\ \cite{Bauer}, that defines many-body localisation
in terms of matrix-product state approximations of eigenstates.
In contrast, in our work, this feature is shown to follow from an absence of transport.
For future research, it would be interesting to further explore this connection and to
address the converse direction, namely to establish that an area-law for all eigenvectors implies that excitations
cannot travel through the system.

A different approach towards approximating the individual eigenstates with a matrix-product state
could potentially be provided by constructing meaningful local constants of motion that give a set of local quantum numbers
\cite{1407.8480}. It seems likely that tools using energy filtering will again prove useful in this context,
a prospect that we briefly discuss in Appendix \ref{constants}.
On the practical side, further numerical effort will be needed to understand the behaviour of individual models and
to fully understand the transport properties for different energy scales.
Our work could well provide a first stepping stone for further endeavours in this direction:
Showing that eigenstates are well approximated by matrix-product states implies that not only ground states,
but in fact also excited states can efficiently be described in terms of tensor networks.
Our result as such does not yet provide an efficient algorithm to
find the respective matrix-product states: this reminds of the situation of the existence of lattice models for which the ground states
are exact matrix-product states, but it amounts to a computationally difficult problem to find them \cite{SchuchNP}. Still, this appears to be
a major step in the direction
of formulating such numerical prescriptions of describing the low-energy sector of many-body localizing systems.
Eventually,
the leading vision in any of these endeavours appears to be a rigorous proof
of many-body localisation in the spirit of the original results by
Anderson. For this, creating a unifying framework and linking the possible definitions seems a key first step.

\emph{Acknowledgements.}
We thank H.\ Wilming, C.\ Gogolin, T.\ Osborne, F.\ Verstraete and F.\ Pollmann for insightful discussions and M.\ Goihl for collaboration on
the numerical code.
We would like to thank the EU (SIQS, AQuS, RAQUEL, COST),
the ERC (TAQ), the BMBF, and the Studenstiftung des deutschen Volkes for support.
VBS is supported by the Swiss National Science Foundation through the National Centre of Competence in Research `Quantum Science and Technology' and by an ETH postdoctoral fellowship.
WB is supported by EPSRC.


\section*{Appendix}

\subsection{Lieb-Robinson bounds}
  \label{LR}
  Zero velocity Lieb-Robinson bounds are a way to rigorously capture the suppression of transport.
  Since there is no comprehensive discussion of such bounds available in the literature, we
  precisely discuss the relationships between different possible definitions here.
  Depending on the precise notion of transport adapted, the formulation of the bound slightly varies.
  The strongest variant is that information cannot be send through a system by locally changing the
  Hamiltonian and is rigorously captured by our definition of strong dynamical localisation
  (Def.\ \ref{df_dynamical_localisation}).
  \begin{df}(Zero velocity Lieb-Robinson bounds: Truncated Hamiltonian)
    \label{LR_truncation}
    A Hamiltonian is said to satisfy zero velocity Lieb-Robinson bounds in the truncated Hamiltonian
    formulation, iff
    \begin{align}
      \label{LR_truncH}
      \| A(t) - e^{i t H_l} A e^{-i t H_l} \| & \leq \min(t,1) C e^{- \mu l}.
    \end{align}
    Here $A(t)$ is an observable in the Heisenberg picture, $H^l$ the truncated Hamiltonian
    including all Hamiltonian terms within a distance $l$ of the support of $A$ and $C,\mu>0$ are constants.
  \end{df}
  As in this whole work, $\|A\|=1=\|B\|$ was chosen for convenience.
  In constrast to the definition in the main text (Def.\ \ref{df_dynamical_localisation}),
  we here include a factor $\min(t,1)$, which is
  stronger than what we need for our proof of Theorem \ref{main_theorem} part \textbf{a}.
  However, since it seems reasonable to assume that the left hand side of Eq.\ \eqref{LR_truncH}
  initially grows continuously with $t$ and this way of stating the bound allows for an easier connection
  between the different Lieb-Robinson versions, we decided to include the factor here.
  Indeed the strongest standard Lieb-Robinson bounds in nearest-neighbour systems also include such a factor
  \cite{1308.2882}.

  Another possible and very physical way of stating that a system has no transport, is to say that excitations
  cannot spread through the system. Mathematically, this can be captured by demanding that the action of
  a local unitary cannot be detected far away, even for long times.
  \begin{df}(Zero velocity Lieb-Robinson bounds: Excitations)
    \label{LR_excitations}
    A Hamiltonian is said to satisfy zero velocity Lieb-Robinson bounds in the excitation
    formulation, iff for any locally supported Hermitian operator $G$ with $\Vert G\Vert = 1$
    \begin{align}
      &|\bra{\psi} A(t) \ket{\psi} - \bra{\psi} e^{i s G} A(t) e^{- i s G} \ket{\psi}| \\
      \leq& s \min(t,1) C^{'} e^{-\mu d(A,G)},
    \end{align}
    where $\ket{\psi}$ is an arbitrary state vector and  $C^{'},\mu>0$ are $t$ independent constants.
  \end{df}

  Most useful for mathematical manipulations is often the formulation in terms of a commutator
  \cite{1001.5280,HamzaSimsStolz12}.
  \begin{df}(Zero velocity Lieb-Robinson bounds: Commutator)
    \label{LR_commutator}
    A Hamiltonian is said to satisfy zero velocity Lieb-Robinson bounds in the commutator
    formulation, iff
    \begin{align}
      \| [A(t) , B] \| & \leq \min(t,1) C^{''} e^{- \mu \dist},
    \end{align}
    with suitable $t$ independent constants $C'',\mu>0$.
  \end{df}

  Interestingly, these definitions can be closely related, similar to the case of standard Lieb-Robinson
  bounds \cite{Barthel,OpenReview,NachtergaeleSims10}.
  \begin{lm}(Connecting zero velocity Lieb-Robinson bounds)
    The different formulations of zero velocity Lieb-Robinson bounds are related in the following way
    \begin{align}
      \text{Truncation (Def.\ \ref{LR_truncation})} &\Rightarrow \text{Commutator (Def.\ \ref{LR_commutator})}\\
      \text{Excitations (Def.\ \ref{LR_excitations})} &\Leftrightarrow \text{Commutator (Def.\ \ref{LR_commutator})},
    \end{align}
    with suitable constants $C, C^{'}, C^{''}$.
  \end{lm}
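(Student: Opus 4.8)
The plan is to establish the three implications separately, each via a short elementary argument, and I expect only the reverse passage from the excitation to the commutator formulation to be genuinely delicate. For \textbf{Truncation $\Rightarrow$ Commutator} I would exploit that the truncated time evolution stays strictly local. Given $B$ supported at distance $\dist$ from $A$, I choose the truncation radius $l$ a fixed amount below $\dist$ (enough to account for the interaction range), so that $e^{itH_l}Ae^{-itH_l}$, whose support lies in the $l$-neighbourhood of $A$, is disjoint from the support of $B$ and hence commutes with it. Adding and subtracting this term,
\[
  [A(t),B] = [A(t) - e^{itH_l}Ae^{-itH_l},\,B] + [e^{itH_l}Ae^{-itH_l},\,B],
\]
the second commutator vanishes and the first is at most $2\,\|A(t)-e^{itH_l}Ae^{-itH_l}\|$. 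Definition \ref{LR_truncation} bounds this by $2\min(t,1)Ce^{-\mu l}$, and absorbing the fixed gap between $l$ and $\dist$ into the prefactor yields the commutator bound with a suitable $C''$.

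For \textbf{Commutator $\Rightarrow$ Excitations} I would use a differentiate-and-integrate argument. Fixing a state $\ket\psi$ and a local Hermitian $G$ with $\|G\|=1$ and setting $f(s)=\bra\psi e^{isG}A(t)e^{-isG}\ket\psi$, one has $f'(u)=i\bra\psi e^{iuG}[G,A(t)]e^{-iuG}\ket\psi$, so integrating from $0$ to $s$ and bounding the integrand pointwise by $\|[A(t),G]\|$ gives
\[
  \Abs{f(s)-f(0)} \le s\,\|[A(t),G]\|.
\]
Applying Definition \ref{LR_commutator} with $B=G$ then reproduces the excitation bound of Definition \ref{LR_excitations} with $C'=C''$.

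The hard part is \textbf{Excitations $\Rightarrow$ Commutator}. First I would divide the excitation estimate by $s$ and let $s\to0$; the left-hand side converges to $\Abs{\bra\psi[A(t),G]\ket\psi}$, giving
\[
  \Abs{\bra\psi[A(t),G]\ket\psi} \le \min(t,1)\,C'\,e^{-\mu \dist[(A,G)]}
\]
for every state $\ket\psi$ and every local Hermitian $G$ with $\|G\|=1$. The obstacle is that this controls only the numerical radius $\sup_{\ket\psi}\Abs{\bra\psi M\ket\psi}$, whereas Definition \ref{LR_commutator} asks for the operator norm $\|M\|$. My resolution is to observe that, since $A(t)$ and $G$ are both Hermitian, $[A(t),G]$ is anti-Hermitian and hence normal, and for normal operators the numerical radius coincides with the operator norm; thus the passage is lossless and $\|[A(t),G]\| \le \min(t,1)C'e^{-\mu \dist[(A,G)]}$. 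A general $B$ with $\|B\|=1$ is then handled by splitting it into Hermitian and anti-Hermitian parts $B=B_1+iB_2$ with $\|B_{1,2}\|\le1$, applying the estimate to each after normalisation, and invoking the triangle inequality to obtain the commutator bound with $C''=2C'$.

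In summary, the two easy directions are purely mechanical, while the entire difficulty is concentrated in the final step: the excitation formulation only probes diagonal matrix elements, so one must argue that such information nevertheless pins down the full operator norm. The key realisation that makes this work without loss is that the commutator of two Hermitian operators is normal, so its numerical radius already equals its norm; the only unavoidable constant, the factor of two in $C''=2C'$, enters through the Hermitian decomposition needed to treat non-Hermitian $B$.
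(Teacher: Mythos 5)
Your proposal is correct and follows essentially the same route as the paper's proof: truncating at $l\approx\dist$ so the second commutator vanishes, the integral representation $\int_0^s \intd r\, e^{irG}[G,A(t)]e^{-irG}$ for Commutator $\Rightarrow$ Excitations, and dividing by $s$ and taking $s\to 0$ for the hard direction. The only difference is that you are more explicit than the paper in upgrading the expectation-value bound to an operator-norm bound: you invoke normality of the anti-Hermitian commutator $[A(t),G]$ so that numerical radius equals norm, whereas the paper simply asserts that arbitrariness of $\ket{\psi}$ and $G$ concludes the proof, implicitly relying on the same fact.
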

  \begin{proof}
    Going from the truncation formulation (Def.\ \ref{LR_truncation}) to the commutator formulation
    (Def.\ \ref{LR_commutator}) is straightforward
    \begin{align}
      \| [A(t) , B] \| & \leq \| [ A(t)-e^{i t H_l} A e^{- i t H_l} , B ] \| \\
                       &+ \| [e^{i t H_l} A e^{- i t H_l} , B ] \|\\
                       &\leq \min(t,1) C^{''} e^{- \mu \dist},
    \end{align}
    where $ l = d(A,B)$ has been chosen to make the second commutator zero.

    To show the one implication of the second part of the lemma, going from Def.\ \ref{LR_excitations}
    to Def.\ \ref{LR_commutator}, we rewrite the excitations formulation as
    \begin{align}
     \left|\bra{\psi} \sum_{k>0} \frac{s^k i^k}{k!} \tau_G^k (A(t)) \ket{\psi}\right|
      \leq s \min(t,1) C^{'} e^{- \mu d(A,G) } ,
    \end{align}
    where $\tau_G(A) = [G,A]$ is the commutator and $\tau^k_G$ denotes multiple applications of it.
    Dividing by $s$ and taking the limit $s\rightarrow 0$ yields
    \begin{align}
      |\bra{\psi} [A(t),G] \ket{\psi}|
      \leq \min(t,1) C^{'} e^{- \mu d(A,G)} \;.
    \end{align}
    Since the bound holds for arbitrary Hermitian observables $G$ and arbitrary state vectors $\ket{\psi}$,
    this concludes the proof.

    To show the converse direction, going from Def.\ \ref{LR_commutator} to Def.\ \ref{LR_excitations},
    we express the left hand side of Def.\ \ref{LR_excitations} as an integral
    \begin{align}
      &|\bra{\psi} A(t) \ket{\psi} - \bra{\psi} e^{i s G} A(t) e^{- i s G} \ket{\psi}|\\
      &= | \bra{\psi} \int_0^s \intd r e^{i r G} [G, A(t)] e^{- i r G} \ket{\psi}|\\
      &\leq \int_0^s \intd r \| [G, A(t) ] \|\\
      &\leq s \min(t,1) C^{''} e^{- \mu \dist},
    \end{align}
    where we used Def.\ \ref{LR_commutator} in the final step.
  \end{proof}
  Using common tools \cite{NachtergaeleSims10}, one can also use the commutator formulation to arrive at
  estimates similar to the truncation formulation, but with a slightly weaker pre-factor,
  depending on the time $t$.

\subsection{Energy filtering}
  \label{appendix:filtering}
  A tool that plays an important role in this work is energy-filtering with respect to a suitable filter function.
  Energy-filtering of a local observable is defined as follows
  \begin{align}
    I_f^H (A) = \int_{-\infty}^\infty dt f(t) A(t) \;.
  \end{align}
  Here $f:\rr\rightarrow \rr^+$ is a so-called filter function, usually taken as a $C^\infty$-function.
  Here, $A(t) = e^{it H} A e^{-itH}$ refers to time evolution under the full Hamiltonian $H$,
  but we will later also consider filters with respect to truncated Hamiltonians, such as $I_f^{H_A}$.
  Energy-filtering allows to alter the matrix elements of a local observable
  in the eigenbasis of a Hamiltonian, while still keeping some form of locality \cite{1001.5280}.
  We will work with two types of filter functions. Gaussian filter functions in particular
  provide a good compromise between locality in Fourier space and decay behaviour in real time
  and hence allow us to pick out narrow energy windows, while still preserving the approximate
  locality of the observable.

  \begin{dflm}[Gaussian filters]
    \label{filter:gaussian}
    A Gaussian filter is defined as
    \begin{align}
      I_\alpha^H(A) &:= \sqrt{\frac{\alpha}{\pi}} \int_{-\infty}^{\infty} \intd t
      e^{- \alpha t^2} A(t),
    \end{align}
    where $\alpha>0$ defines the sharpness of the filter.
    The matrix elements in the eigenbasis of the Hamiltonian fulfil
    \begin{align}
      \bra{r} I_f^H(A) \ket{s} &= \bra{r} A \ket{s} e^{-{(E_s - E_r)^2}/{(4 \alpha)}}.
    \end{align}
    For strongly localizing systems (Def.\ \ref{df_dynamical_localisation}),
    local observables filtered with a Gaussian filter still remain approximately
    local in the sense that
    \begin{align}
      \|I_\alpha^H (A)  - I_\alpha^{H_l} (A)\| \leq c_\mathrm{loc} e^{-l}.
    \end{align}
    For systems with a mobility edge (Def.\ \ref{def:subsp_loc}), we have
    \begin{align}
      \Abs{\bra{k} [I_f(A),B] \ket{k}} \leq c_\mathrm{mob} e^{-d(A,B)}.
    \end{align}
  \end{dflm}
  \begin{proof}
    The Gaussian suppression of off-diagonal elements readily follows from the definition
    \begin{align}
      \bra{r} I_\alpha(A) \ket{s} &:= \sqrt{\frac{\alpha}{\pi}} \int \intd t e^{- \alpha t^2} \bra{r} A(t) \ket{s}\\
                                  &= \bra{r} A \ket{s} \sqrt{\frac{\alpha}{\pi}} \int \intd t e^{- \alpha t^2}
        e^{i t (E_r - E_s)} ,
    \end{align}
    and the fact that the Fourier transform of a Gaussian is again a Gaussian.
    Assuming strong dynamical localisation, deriving locality is straightforward,
    \begin{align}
      &\| I_\alpha^H (A) - I_\alpha^{H_A} (A) \| \\
      =&\left \| \sqrt{\frac{\alpha}{\pi}} \int_{-\infty}^\infty \intd t e^{-\alpha t^2}
      \left (A(t) - e^{i t H_A} A e^{-i t H_A} \right) \right \|\\
      \leq& c_{\text{loc}} e^{-\mu l} ,
    \end{align}
    where have used that the Gaussian filter is normalised.
    The case with a mobility edge can be shown in the same way.
  \end{proof}

  \begin{figure}[t]
    \includegraphics[width=\columnwidth]{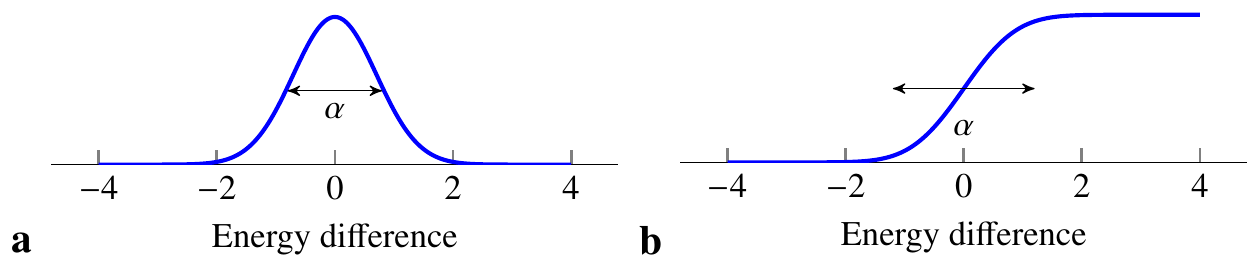}
    \caption{
      Presented is a schematic sketch of the energy filtering under a Gaussian (\textbf{a}) and a high-pass (\textbf{b})
      filter and the role of the sharpness $\alpha$.
    }
    \label{filter_sketch}
  \end{figure}

  Another important filter is the high-pass filter, which blocks large negative energy differences.
  \begin{dflm}[High-pass filters]
    \label{filter:high-pass}
    A high-pass filter is defined by
    \begin{align}
      \Gamma_{\alpha}(A) &:= \lim_{\varepsilon\rightarrow 0}
          \frac{i}{2\pi}  \int_{-\infty}^{\infty} \intd t  \frac{e^{-\alpha t^2}}{t+ i\varepsilon} A(t) .
    \end{align}
    Here $\alpha>0$ describes the sharpness of the filter. The matrix elements can be
    bounded for any $|E_s-E_r| \geq \sigma \geq0$ by
    \begin{align}\label{eq:subsp_cases}
      \frac{\bra{r} \Gamma_{\alpha}(A) \ket{s}}{\bra{r} A \ket{s}} =
      \begin{cases}
        \frac{1}{2} e^{-{\sigma^2}/{(4\alpha)}} & \text{for }\; E_r\geq E_s \\
        1 - \frac{1}{2} e^{-{\sigma^2}/{(4\alpha)}} & \text{for }\; E_r\leq E_s
      \end{cases}\;.
    \end{align}
    Local observables remain approximately local under a high-pass filter, in the sense that
    \begin{align}
      |\bra{k} [\Gamma_{\alpha}(A),B] \ket{k}| \leq
      \frac{e^{-\mu \dist}}{2 \pi}
      \left(4 + \ln \frac{\pi}{4 \alpha} \right) .
    \end{align}
  \end{dflm}
  \begin{proof}
    Calculating the off-diagonal elements of such a high-pass filter relies
    on the subsequent bound on the error function of a
    Gaussian random variable proven in Ref.\ \cite{HastingsKoma06}, stated here as 
    Lemma \ref{lem:gauss_tails}.
    Making use of the identity proven in Ref.\ \cite{HastingsKoma06}
    \begin{align}
      \lim_{\varepsilon\rightarrow 0} &\frac{i}{2\pi} \int_{-\infty}^\infty
      \frac{e^{-\alpha t^2} e^{-i\Delta E \, t}}{t+ i\varepsilon} \intd t\\
      = &\frac{1}{2\pi} \sqrt{\frac \pi \alpha} \int_{-\infty}^0
      e^{-{\left(\omega+\Delta E\right)^2}/{(4\alpha)}}\intd\omega ,
    \end{align}
    the bounds for the matrix elements of $\Gamma_{\alpha}(A)$ follow from Lemma \ref{lem:gauss_tails}.
    The locality statement can be shown by splitting the integral into three parts \cite{HamzaSimsStolz12}
    and using our assumption of dynamical localisation
    \begin{multline}
      \bra{k} [\Gamma_{\alpha}(A),B] \ket{k}
      \leq \frac{e^{-\mu \dist}}{2\pi}\times\\
      \left(\int_{\Abs{t}\leq 1} e^{-\alpha t^2} \intd t  + \int_{1\leq\Abs{t}\leq \lambda}
      \frac{e^{-\alpha t^2}}{t} \intd t + \int_{\Abs{t}\geq\lambda} \frac{e^{-\alpha t^2}}{t} \intd t\right)\\
      \leq \frac{e^{-\mu \dist}}{\pi}
      \left(1 + \ln\lambda + \frac{1}{2\lambda} \sqrt{\frac{\pi}{\alpha}} \right) .
    \end{multline}
    Here estimating the first term used the $\min(1,t)$ factor included in the definition of a mobility edge
    (Def. \ref{def:subsp_loc}).
    Choosing
    \begin{equation}
    \lambda = \frac{\sqrt{\pi}}{2 \sqrt{\alpha}}
   \end{equation}
   concludes the proof.
  \end{proof}

   \begin{lm}[Hastings-Koma]\label{lem:gauss_tails}
      Let $E\in \rr$, $\alpha>0$ then for all $\gamma>0$ with $E\leq-\gamma$
      \begin{align}
        \frac{1}{2\pi} \sqrt{\frac \pi \alpha} \int_{-\infty}^0
        e^{-\frac{\left(\omega+E\right)^2}{4\alpha}}\intd\omega
        \leq \frac{1}{2} e^{-{\gamma^2}/({4\alpha})}
      \end{align}
      and for all $\gamma>0$ with $E \geq \gamma$
      \begin{align}
        \left | \frac{1}{2\pi} \sqrt{\frac \pi \alpha} \int_{-\infty}^0
        e^{-\frac{\left(\omega+E\right)^2}{4\alpha}}\intd\omega  - 1 \right |
        \leq \frac{1}{2} e^{-{\gamma^2}/({4\alpha})} .
      \end{align}
    \end{lm}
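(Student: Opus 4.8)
The plan is to recognise the displayed quantity as a Gaussian cumulative distribution and then invoke a standard sub-Gaussian tail bound. Substituting $u = \omega + E$ turns the integral into
\begin{align}
  \frac{1}{2\pi}\sqrt{\frac{\pi}{\alpha}} \int_{-\infty}^{E} e^{-u^2/(4\alpha)}\,\intd u ,
\end{align}
and one checks that the prefactor equals $1/(2\sqrt{\pi\alpha})$, which is exactly the normalisation making $\int_{-\infty}^{\infty}$ of the Gaussian density $e^{-u^2/(4\alpha)}$ equal to $1$. Hence the quantity is the cumulative distribution function $\Phi(E)$ of a centred Gaussian of variance $2\alpha$, and the lemma is just a pair of tail bounds on $\Phi$.

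Next I would reduce both inequalities to a single one-sided estimate. Since the density is even, $\Phi(-E) = 1 - \Phi(E)$, so the second inequality, which bounds $|\Phi(E)-1| = \Phi(-E)$ for $E \ge \gamma$, follows from the first applied with $E$ replaced by $-E \le -\gamma$. It therefore suffices to treat the left tail $E \le -\gamma$. Substituting $v = -u$ and using that $|E| \ge \gamma$ together with the monotone decay of $e^{-v^2/(4\alpha)}$ on $[0,\infty)$, this reduces to the single estimate
\begin{align}
  \frac{1}{2\sqrt{\pi\alpha}} \int_{\gamma}^{\infty} e^{-v^2/(4\alpha)}\,\intd v \le \tfrac{1}{2}\, e^{-\gamma^2/(4\alpha)} .
\end{align}

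The core step is this tail bound, which I would prove by a monotonicity argument. Define the difference
\begin{align}
  \phi(\gamma) = \tfrac{1}{2}\, e^{-\gamma^2/(4\alpha)} - \frac{1}{2\sqrt{\pi\alpha}} \int_{\gamma}^{\infty} e^{-v^2/(4\alpha)}\,\intd v .
\end{align}
Then $\phi(0) = 0$ (both terms equal $1/2$) and $\lim_{\gamma\to\infty}\phi(\gamma)=0$. Differentiating gives
\begin{align}
  \phi'(\gamma) = e^{-\gamma^2/(4\alpha)} \left( -\frac{\gamma}{4\alpha} + \frac{1}{2\sqrt{\pi\alpha}} \right) ,
\end{align}
which is positive for small $\gamma$ and negative once $\gamma$ exceeds $2\sqrt{\alpha/\pi}$. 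Hence $\phi$ rises from $0$, attains a single maximum, and then decreases back towards its limit $0$; a function that is strictly decreasing on a final interval yet tends to $0$ there cannot dip below $0$, so $\phi(\gamma) \ge 0$ for all $\gamma \ge 0$, which is precisely the claimed bound.

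I expect this tail estimate to be the only real obstacle: a crude Chernoff or Mills-ratio bound yields a prefactor of order $\alpha/\gamma$ that diverges as $\gamma \to 0$ and fails to match the clean constant $1/2$, even though the bound is in fact tight at $\gamma = 0$. The monotonicity argument is what handles the whole range $\gamma \ge 0$ uniformly, and this is exactly the sub-Gaussian estimate recorded in Ref.\ \cite{HastingsKoma06}.
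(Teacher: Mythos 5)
Your argument is correct, and it is worth noting that the paper itself offers no proof of this lemma at all: it is imported verbatim from Ref.\ \cite{HastingsKoma06}, so there is nothing internal to compare against. Your reductions are all sound: the prefactor $\tfrac{1}{2\pi}\sqrt{\pi/\alpha}=\tfrac{1}{2\sqrt{\pi\alpha}}$ is indeed the normalisation of the density $e^{-u^2/(4\alpha)}$ (variance $2\alpha$), the substitution $u=\omega+E$ identifies the quantity as $\Phi(E)$, the second inequality follows from the first via the symmetry $1-\Phi(E)=\Phi(-E)$, and monotonicity of $\Phi$ reduces everything to the single tail bound at $\gamma$. The monotonicity argument for $\phi(\gamma)=\tfrac12 e^{-\gamma^2/(4\alpha)}-\Phi(-\gamma)$ is also correct: $\phi(0)=0$, $\phi(\infty)=0$, and $\phi'$ changes sign exactly once (at $\gamma=2\sqrt{\alpha/\pi}$) from positive to negative, so $\phi$ is nonnegative on the increasing initial segment by $\phi(0)=0$ and on the decreasing final segment by $\lim_{\gamma\to\infty}\phi(\gamma)=0$. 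Your closing remark is also apt: a Mills-ratio bound of the form $\frac{2\alpha}{\gamma}\cdot\frac{1}{2\sqrt{\pi\alpha}}e^{-\gamma^2/(4\alpha)}$ degenerates as $\gamma\to 0$, whereas the constant $\tfrac12$ is exactly saturated there, so the single-sign-change argument (equivalently, the standard Chernoff/mgf bound $\mathbb{P}(X\geq\gamma)\leq e^{-\gamma^2/(4\alpha)}\,\mathbb{E}\,e^{\lambda X-\lambda\gamma}$-type estimate, optimised) is genuinely needed to get the clean prefactor uniformly in $\gamma>0$. In short: a complete, elementary, self-contained proof of a statement the paper only cites.
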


  For later use, we give the Gaussian and high pass filters of sharpness $\alpha$ the
  symbols $I_\alpha^H$ and $\Gamma^H_\alpha$, respectively.
  In case the local observable is filtered with the full system Hamiltonian,
  we will often omit the $H$.
  It is an interesting insight that the locality structure of a Gaussian filter is independent of its sharpness
  $\alpha$ for dynamically localising systems, while it still depends on $\alpha$ for a high-pass filter.
  It is precisely this property of Gaussian filters that will allow us to prove part $\textbf{b}$
  of Theorem \ref{main_theorem}.

\subsection{Local constants of motion}
  \label{constants}
  In this appendix, we briefly comment on how local constants of motion could be constructed using
  energy filtering.
  A natural candidate for local constants of motion
  are the energy-filtered versions of the local Hamiltonian terms
  \begin{align}
    M_j := I_\alpha^H (h_j) .
  \end{align}
  Since they act non-trivially on most eigenstates, it is clear that
  they are retained by the energy-filtering without being averaged to zero and thus indeed provide useful quantum numbers.
  What is more, strong dynamical localisation implies that they will stay approximately local.
  This construction can be optimised by choosing suitable filter functions, allowing for an interpolation
  between the locality of the observables and making them completely constant in time.
  In particular, taking the time-average is equivalent to choosing a constant filter function \cite{1407.8480}
  leading to quantities that are completely preserved in time, while they will usually not be strictly local anymore.
  Naturally, this approach is not limited to taking the local Hamiltonian terms and in principle any local
  observable that does not vanish after applying the energy filter will provide useful quantum numbers.
  Specifically, polynomials of the local Hamiltonian terms can equally well be used.
  In systems with a mobility edge, where not all transport is suppressed, this construction is still possible,
  but the resulting constants of motion might no longer be approximately local.

  Since local constants of motion by definition commute with the Hamiltonian, their local eigenbasis
  is compatible with the global energy eigenbasis of the Hamiltonian.
  Thus, they could be used to directly construct local states in a matrix-product state form, as
  long as enough constants of motion exist in order to use their quantum numbers to divide the full Hilbert
  space into suitable small fractions.

\subsection{Proof of Theorem \ref{main_theorem} part a}
  \label{part_a}
  In this appendix, we will formulate the details of the proof of Theorem \ref{main_theorem} part \textbf{a},
  which shows that strong dynamical localisation implies exponential clustering of all eigenvectors.
  For this, we will rely on the following lemma.

  \begin{figure}
    \includegraphics[]{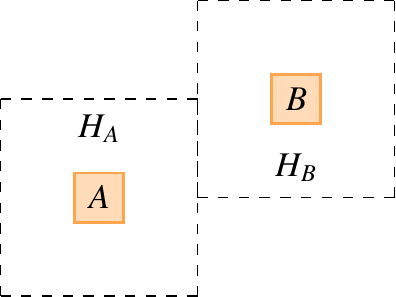}
    \caption{
      This figure shows the Hamiltonian decomposition used in the proof for part \textbf{a} of our main theorem.
    }
    \label{fig:Hamiltonian_decomposition}
  \end{figure}

  \begin{lm}[Decoupled energy filtering]
  \label{lm_energy_nondegenerate}
    Under the assumption of locally independent gaps (\textbf{AII}),
    energy-filtering of two observables can be factorized into local energy filters
    \begin{align}
      \| I_\alpha^{H_A + H_B} (A B) - I_\alpha^{H_A} (A) I_\alpha^{H_B} (B) \|
      \leq 2^{4 N +1} e^{-{\xi^2}/({4 \alpha})} ,
    \end{align}
    with $\xi=\min \{\eta,\sqrt{2} \tilde{\gamma}\}$.
    Here $H_A$ and $H_B$ are chosen to include all Hamiltonian terms within distance
    $d(A,B)/2$ of the support of $A$ and $B$ respectively (See Fig. \ref{fig:Hamiltonian_decomposition}).
  \end{lm}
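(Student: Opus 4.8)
The plan is to exploit the product structure that the commuting, disjointly supported restrictions $H_A$ and $H_B$ impose on the filtered observables, and then to read off the error entirely from the suppression factors that the Gaussian filter attaches to the matrix elements. Since $A$ and $B$ are supported inside the regions covered by $H_A$ and $H_B$ respectively, and since these regions are disjoint with $[H_A,H_B]=0$, everything factorises over $\mathcal{H}_A\otimes\mathcal{H}_B$, where $AB=A\otimes B$: writing $A(t)=e^{itH_A}Ae^{-itH_A}$ and $B(t)=e^{itH_B}Be^{-itH_B}$ for the evolutions under the individual restrictions, one checks that $e^{it(H_A+H_B)}(AB)e^{-it(H_A+H_B)}=A(t)\otimes B(t)$. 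Consequently the joint filter $I_\alpha^{H_A+H_B}(AB)$ is a \emph{single} Gaussian integral of $A(t)\otimes B(t)$, while $I_\alpha^{H_A}(A)\,I_\alpha^{H_B}(B)$ is the \emph{double} integral of $A(s)\otimes B(t)$ against $e^{-\alpha s^2}e^{-\alpha t^2}$. First I would therefore pass to a joint eigenbasis $\ket{a}\otimes\ket{b}$ of $H_A$ and $H_B$ and invoke the matrix-element identity of Definition \& Lemma \ref{filter:gaussian}.

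In that basis both operators share the common factor $\bra{a}A\ket{a'}\bra{b}B\ket{b'}$, so the entire difference is carried by a single scalar mismatch factor. With the local gaps $\Delta_A=E_a-E_{a'}$ and $\Delta_B=E_b-E_{b'}$ this factor is
\begin{align}
  g(\Delta_A,\Delta_B)=e^{-(\Delta_A+\Delta_B)^2/(4\alpha)}-e^{-(\Delta_A^2+\Delta_B^2)/(4\alpha)},
\end{align}
the first term stemming from the joint filter, which only sees the total gap $\Delta_A+\Delta_B$, and the second from the product of the two separate filters. The key elementary observation is that $g$ vanishes identically whenever $\Delta_A=0$ or $\Delta_B=0$, so only matrix elements for which \emph{both} local gaps are nonzero can contribute at all.

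The main work — and the step where assumption \textbf{AII} is genuinely needed — is to bound $g$ when $\Delta_A,\Delta_B\neq0$. The second Gaussian is harmless: non-degeneracy of $H_A$ and $H_B$ forces $\Abs{\Delta_A},\Abs{\Delta_B}\geq\tilde\gamma$, hence $\Delta_A^2+\Delta_B^2\geq2\tilde\gamma^2$ and this term is at most $e^{-(\sqrt2\tilde\gamma)^2/(4\alpha)}$. The delicate term is the first one, because the total gap $\Delta_A+\Delta_B$ could in principle be tiny even when the individual gaps are not — precisely when $\Delta_A=-\Delta_B$, i.e.\ when a gap of $H_A$ coincides with a gap of $H_B$. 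This is exactly what local independence of gaps rules out: reading $\Delta_A+\Delta_B$ as the difference between the gap $E_a-E_{a'}$ of $H_A$ and the gap $E_{b'}-E_b$ of $H_B$, assumption \textbf{AII} says these can agree only in the trivial case $a=a'$, $b=b'$, so for nonzero local gaps one has $\Abs{\Delta_A+\Delta_B}\geq\eta$ and the first term is at most $e^{-\eta^2/(4\alpha)}$. Together this yields $\Abs{g}\leq2e^{-\xi^2/(4\alpha)}$ with $\xi=\min\{\eta,\sqrt2\tilde\gamma\}$, uniformly over all contributing matrix elements.

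It then remains to convert this element-wise estimate into an operator-norm bound. Each matrix element of the difference is bounded by $\Abs{\bra{a}A\ket{a'}}\,\Abs{\bra{b}B\ket{b'}}\,\Abs{g}\leq2e^{-\xi^2/(4\alpha)}$, using $\|A\|=\|B\|=1$, and I would then use the crude chain $\|\cdot\|\leq\sum_{\text{entries}}\Abs{\cdot}$: summing over all matrix-element positions on $\mathcal{H}_A\otimes\mathcal{H}_B$ and bounding each reduced dimension by $2^N$ gives at most $(2^N\cdot2^N)^2=2^{4N}$ positions, hence the claimed $2^{4N+1}e^{-\xi^2/(4\alpha)}$. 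I expect the genuine obstacle to lie in controlling the cross term through $\eta$ rather than in this final counting: the whole subtlety is that the joint filter suppresses only the \emph{total} energy difference, so one must argue that the non-symmetry of the two local spectra prevents accidental cancellations. Once that is secured, the exponentially large prefactor is immaterial, since the main proof is free to choose $\alpha$ exponentially small in $N$ and thereby drive $e^{-\xi^2/(4\alpha)}$ below it.
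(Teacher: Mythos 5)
Your proposal is correct and follows essentially the same route as the paper: work in the joint eigenbasis of $H_A$ and $H_B$, use the Gaussian suppression of matrix elements together with \textbf{AII} to get the $e^{-\eta^2/(4\alpha)}$ and $e^{-2\tilde\gamma^2/(4\alpha)}$ factors, and convert to an operator-norm bound by summing over all $2^{4N}$ matrix-element positions. The only (cosmetic) difference is that the paper inserts the diagonal part $\sum_{a,b}\ketbra{a,b}{a,b}AB\ketbra{a,b}{a,b}$ and applies the triangle inequality, whereas you compare matrix elements directly via the mismatch factor $g$ — which in fact handles the mixed off-diagonal terms ($a=a'$, $b\neq b'$) a bit more transparently, since $g$ vanishes there exactly.
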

  \begin{proof}
    To show that instead of applying an energy filter to $AB$, we can also
    apply it to the observables individually, we need to use that
    the eigenvalues of $H_A + H_B$ are disconnected on the two regions
    and we can thus label them by two different quantum numbers $a,b$.
    We hence get
    \begin{align}
     & \biggl \| I_\alpha^{H_A + H_B} (A B)
      - \sum_{a,b} \ketbra{a,b}{a,b} AB \ketbra{a,b}{a,b}\\
      &+ \sum_{a,b} \ketbra{a,b}{a,b} AB \ketbra{a,b}{a,b}
      - I_\alpha^{H_A} (A) I_\alpha^{H_B} (B) \biggr\|
      .
    \end{align}
    We will use the triangle inequality and proceed to show that both energy
    filters give only the diagonal entires up to a small error.
    Starting with the first term gives the following estimate,
    \begin{align}
      &\|I_\alpha^{H_A} (A) I_\alpha^{H_B} (B)
      -\sum_{a,b} \ketbra{a,b}{a,b} AB \ketbra{a,b}{a,b}\|\\
      =&\biggl \|\sum_{a,b} \sum_{a' \neq a,b' \neq b}
        \ketbra{a,b}{a,b} AB \ketbra{a',b'}{a',b'}\\
        &\exp\left(-\frac{(E_a - E_{a'} + E_b - E_{b'})^2}{4 \alpha}\right)
        \biggr\|\nonumber
        \\
      =& \sum_{a,b} \sum_{a' \neq a,b' \neq b}
        \biggl\|\ketbra{a,b}{a,b} AB \ketbra{a',b'}{a',b'}\\
        &\exp\left({-\frac{(E_a - E_{a'} + E_b - E_{b'})^2}{4 \alpha}}\right)\biggr\|\nonumber\\
        =& 2^{4 N} e^{-{\eta^2}/{(4 \alpha)}} ,
    \end{align}
    where we assumed locally independent gaps (\textbf{AII}).
    The second term yields the following estimate,
    \begin{align}
      &\|I_\alpha^{H_A} (A) I_\alpha^{H_B} (B)
      -\sum_{a,b} \ketbra{a,b}{a,b} AB \ketbra{a,b}{a,b}\|\\
      =& \biggl\|\sum_{a,b} \sum_{a'\neq a,b' \neq b}
        \ketbra{a,b}{a,b} AB \ketbra{a',b'}{a',b'}\\
        &\exp\left({-\frac{(E_a - E_{a'})^2}{4 \alpha}}\right)
        e^{-\frac{(E_b - E_{b'})^2}{4 \alpha}}\biggr\|\\
      =& \biggl \|\sum_{a,b} \ketbra{a,b}{a,b} AB
        \widetilde{D}_{\tilde{\gamma}}^a \widetilde{D}_{\tilde{\gamma}}^b \biggr\|
        \leq 2^{2N} e^{-2 \tilde{\gamma}^2/(4 \alpha)} .
    \end{align}
    Here, $\widetilde{D}_{\tilde{\gamma}}^a$ is a diagonal matrix with entries
  $e^{-(E_a - E_a')^2/(4 \alpha)}$ for $a\neq a'$ and 0 otherwise
  and we assumed locally independent gaps (\textbf{AII}).
  \end{proof}

  Finally let us briefly comment that the local independence of
  gaps can alternatively be ensured by demanding that the joint Hamiltonian $H = H_A + H_B$
  supported on two disjoint rectangular regions has non-degenerate energies, since these
  energies are all possible sums of the local energies $E_a + E_b$.
  \newline

  \textbf{Proof of Theorem \ref{main_theorem}}
  We are now in the position to prove part \textbf{a} of our main theorem.
  \setcounter{thm}{0}
  \begin{thm}[Clustering of correlations of eigenvectors]
    The dynamical properties of a local Hamiltonian imply clustering correlations
    of its eigenvectors in the following way.
    \begin{itemize}
      \item[\textbf{a}]
        If the Hamiltonian shows strong dynamical localisation
        and its spectrum fulfils assumptions \textbf{AI} and \textbf{AII}
        then all its eigenvectors have exponentially clustering correlations
        \begin{align}
          |\bra{k} A B \ket{k} - \bra{k} A \ketbra{k}{k} B \ket{k}|
          \leq 4 c_\mathrm{loc} e^{-{\mu} \dist /2} .
        \end{align}
    \end{itemize}
  \end{thm}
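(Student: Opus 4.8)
The quantity to control is the connected correlator $\bra{k}AB\ket{k}-\bra{k}A\ket{k}\bra{k}B\ket{k}$, and the plan is to route both terms through Gaussian energy filters, interpolating between the full Hamiltonian $H$ and the commuting local restrictions $H_A,H_B$ of Lemma \ref{lm_energy_nondegenerate}. The starting point is that a Gaussian filter preserves diagonal matrix elements, so $\bra{k}AB\ket{k}=\bra{k}I_\alpha^H(AB)\ket{k}$ for every sharpness $\alpha>0$; this lets me insert a filter of arbitrary width at no cost and optimise over $\alpha$ only at the very end.

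First I would make the joint filter local. Since $(AB)(t)=A(t)B(t)$ and, because $[H_A,H_B]=0$ with each observable commuting with the distant restriction, evolution under $H_A+H_B$ factorises into $A_A(t):=e^{itH_A}Ae^{-itH_A}$ and $B_B(t):=e^{itH_B}Be^{-itH_B}$, the telescoping identity $A(t)B(t)-A_A(t)B_B(t)=[A(t)-A_A(t)]B(t)+A_A(t)[B(t)-B_B(t)]$ combined with strong dynamical localisation (Def.\ \ref{df_dynamical_localisation}) at range $l=\dist/2$ yields $\|I_\alpha^H(AB)-I_\alpha^{H_A+H_B}(AB)\|\leq 2c_\mathrm{loc}e^{-\mu\dist/2}$. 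Crucially this bound is independent of $\alpha$, as the Gaussian weight integrates to one.

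Next I would apply the decoupling Lemma \ref{lm_energy_nondegenerate}, which is where assumption \textbf{AII} enters: it replaces $I_\alpha^{H_A+H_B}(AB)$ by the product $I_\alpha^{H_A}(A)\,I_\alpha^{H_B}(B)$ up to $2^{4N+1}e^{-\xi^2/(4\alpha)}$. I then undo the truncation using the single-observable locality bound of Definition \& Lemma \ref{filter:gaussian} twice, together with $\|I_\alpha^{H}(A)\|\leq\|A\|=1$ and the same telescoping trick, passing from $I_\alpha^{H_A}(A)\,I_\alpha^{H_B}(B)$ to $I_\alpha^H(A)\,I_\alpha^H(B)$ at the cost of a further $2c_\mathrm{loc}e^{-\mu\dist/2}$. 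Now the product of two full-$H$ filters can be expanded in the eigenbasis by inserting a resolution of identity: each factor contributes a Gaussian weight, so $\bra{k}I_\alpha^H(A)I_\alpha^H(B)\ket{k}=\sum_m\bra{k}A\ket{m}\bra{m}B\ket{k}\,e^{-(E_m-E_k)^2/(2\alpha)}$. The $m=k$ term is exactly $\bra{k}A\ket{k}\bra{k}B\ket{k}$, while assumption \textbf{AI} gives $|E_m-E_k|\geq\gamma$ for $m\neq k$, so Cauchy--Schwarz on $\sum_m|\bra{k}A\ket{m}|^2\leq\|A\|^2$ controls the off-diagonal remainder by $e^{-\gamma^2/(2\alpha)}$.

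Collecting the four contributions gives $|\bra{k}AB\ket{k}-\bra{k}A\ket{k}\bra{k}B\ket{k}|\leq 4c_\mathrm{loc}e^{-\mu\dist/2}+2^{4N+1}e^{-\xi^2/(4\alpha)}+e^{-\gamma^2/(2\alpha)}$. The step I expect to be the real obstacle -- and the reason the whole scheme works -- is the interplay of the two error types: the localisation errors are $\alpha$-independent, whereas the two spectral errors vanish as $\alpha\to 0^+$. Since $\gamma,\xi>0$ are fixed for a finite system, sending $\alpha\to 0$ annihilates even the exponentially large $2^{4N+1}$ prefactor and leaves precisely $4c_\mathrm{loc}e^{-\mu\dist/2}$. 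The remaining care is purely bookkeeping: choosing $H_A,H_B$ to reach distance $\dist/2$ while staying disjoint and commuting, exactly as required by Lemma \ref{lm_energy_nondegenerate}.
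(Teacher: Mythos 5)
Your proposal is correct and follows essentially the same route as the paper's proof: the identical four-step Gaussian-filter chain $I_\alpha^H(AB)\to I_\alpha^{H_A+H_B}(AB)\to I_\alpha^{H_A}(A)I_\alpha^{H_B}(B)\to I_\alpha^{H}(A)I_\alpha^{H}(B)\to A\ketbra{k}{k}B$, with the decoupling lemma supplying the middle step and the observation that the $\alpha$-independent localisation errors survive while the spectral errors (with their $2^{O(N)}$ prefactors) are killed by taking $\alpha$ sufficiently small. Your explicit telescoping identities and the Cauchy--Schwarz treatment of the final off-diagonal sum merely make explicit what the paper asserts via the filter properties, and your bookkeeping lands on the same constant $4c_\mathrm{loc}$ at $l=\dist/2$.
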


  \begin{proof}
    The proof uses the local Hamiltonians $H_A$ and $H_B$ (See Fig. \ref{fig:Hamiltonian_decomposition})
    and the following estimates
  \begin{align}
    \bra{k} AB \ket{k}
    &= \bra{k} I_\alpha^H ( AB ) \ket{k} \\
    &\displaystyle_{I}^{=}
    \bra{k} I_\alpha^{H_A + H_B} ( AB ) \ket{k} + \Omega(e^{- \mu l})\\
    &\displaystyle_{II}^{=}
    \bra{k} I_\alpha^{H_A} (A) I_\alpha^{H_B} (B) \ket{k} + \Omega(e^{- \mu l}) \\
    &\displaystyle_{III}^{=}
    \bra{k} I_\alpha^{H} (A) I_\alpha^{H} (B) \ket{k} + \Omega(e^{- \mu l}) \\
    &\displaystyle_{IV}^{=}
    \bra{k} A \ketbra{k}{k} B \ket{k}  + \Omega(e^{- \mu l}) .
  \end{align}
  In each step, an error term of the form $C_{I} e^{- \mu l}$ with $C_I$ to $C_{IV}$ will be introduced.
  The constant $C$ appearing in the main theorem will simply be the sum of them.
  Constants $C_I,C_{III},C_{IV}$ directly follow from the properties of the applied Gaussian filter
  (Def.\ \ref{filter:gaussian})
  and constant $C_{II}$ is derived in Lemma \ref{lm_energy_nondegenerate}
  \begin{align}
    C_{I} &= 2 c_\mathrm{loc},\\
    C_{II} &= 2^{4 N +1} e^{-{\xi^2}/({4 \alpha})} e^{\mu l},\\
    C_{III} &= c_\mathrm{loc},\\
    C_{IV} &=  e^{{- \gamma^2}/({4 \alpha})} e^{\mu l},\\
    C :&= C_I + C_{II} + C_{III} + C_{IV} .
  \end{align}
  The constants $C_{II}$ and $C_{IV}$ can be chosen arbitrarily small, by picking a sharp enough filter
  function, meaning a sufficiently small $\alpha$.
  In particular, we can choose $C_{II},C_{IV} \leq  c_\mathrm{loc}/2$, which yields
  \begin{align}
    C = 4 c_\mathrm{loc} .
  \end{align}
  Choosing $l = \dist/2$ concludes the proof.
  \end{proof}

\subsection{Proof of Theorem \ref{main_theorem} part \textbf{b}}
  \label{proof:part_b}
  In this appendix, we will show the second part of Theorem \ref{main_theorem} based on a mobility edge.
  For convenience, we start by repeating Definition \ref{def:subsp_loc} of a mobility gap.
  \setcounter{dfrestated}{1}
  \begin{dfrestated} [Mobility edge]
    A Hamiltonian $H$ is said to have a mobility edge at energy $E_{\mathrm{mob}}$
    iff its time evolution satisfies for all times $t$,
    $\forall \rho\in\vspan\{\ketbra{l}{k}: \; E_l,E_k \leq E_{\mathrm{mob}}\}$
    \begin{align}
      | \mathrm{tr} \left(\rho [A(t),B]\right)\vert
      \leq \min(t,1) c_\mathrm{mob} e^{-\mu \dist}\; ,
    \end{align}
    where $c_\mathrm{mob}$ is a $t$ independent constant.
    That is, all transport is suppressed for states supported only on the low-energy sector
    below energy $E_{\mathrm{mob}}$.
  \end{dfrestated}

  The first step for our proof will be to show that each eigenvectors only contributes to the
  correlation function by a term that is exponentially suppressed with the distance of the two
  observables.
  For this, we need the localisation assumption only for the eigenvector $\ket{k}$.
  \begin{lm}\label{lem:exp_dec_matrix_elem}
    Let $A,B$ be local observables and $\ket{k}$ a weakly localised eigenvector of $H$, i.e. satisfying
    \begin{align}
      \sup_{t\in [0,\infty)}\Abs{\bra{k}[A(t),B] \ket{k}}\leq c_{\rm mob} e^{-\mu \dist} .
    \end{align}
    Then the contribution of any eigenvector $\ket{l}, l \neq k$,
    to the correlation function will be exponentially suppressed, in the sense that
    \begin{align}
      \Abs{\bra{k} A \ketbra{l}{l} B \ket{k}}
      \leq 2 c_\mathrm{mob} e^{-\mu \dist}.
    \end{align}
  \end{lm}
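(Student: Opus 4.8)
The plan is to read off the desired matrix-element product as a single Fourier amplitude of the dynamically controlled quantity $\bra{k}[A(t),B]\ket{k}$, and to bound that amplitude by a time average. First I would expand this function in the energy eigenbasis. Writing $A(t)=e^{itH}Ae^{-itH}$, using $H\ket{k}=E_k\ket{k}$, and inserting a resolution of the identity $\sum_m\ketbra{m}{m}$ gives
\begin{align}
 \bra{k}[A(t),B]\ket{k}
 &= \sum_m \bra{k}A\ket{m}\bra{m}B\ket{k}\, e^{it(E_k-E_m)}\\
 &\quad - \sum_m \bra{k}B\ket{m}\bra{m}A\ket{k}\, e^{it(E_m-E_k)} .
\end{align}
Since the lattice is finite this is a finite trigonometric polynomial in $t$, and the coefficient sitting at frequency $E_k-E_l$ in the first sum is exactly the quantity $\bra{k}A\ketbra{l}{l}B\ket{k}$ that we wish to bound.

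The key step is to isolate that coefficient. Fix $l\neq k$ and form the one-sided time average against the phase $e^{-it(E_k-E_l)}$. By assumption \textbf{AI} the energies are non-degenerate, so in the first sum only the term $m=l$ survives the average (every other term oscillates at frequency $E_l-E_m\neq0$ and averages to zero). The only danger is a collision with the second sum: one of its terms survives iff $E_m-E_k=E_k-E_l$, i.e.\ iff $|E_m-E_k|=|E_l-E_k|$ with $m\neq l$ (the case $m=l$ would force $E_l=E_k$, excluded since $l\neq k$). This is precisely what assumption \textbf{AIII} forbids, the gap $\zeta$ guaranteeing strict separation of the two families of frequencies. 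Hence the average extracts the single coefficient cleanly,
\begin{align}
 &\bra{k}A\ketbra{l}{l}B\ket{k}\\
 &= \lim_{T\to\infty}\frac1T\int_0^T \bra{k}[A(t),B]\ket{k}\, e^{-it(E_k-E_l)}\,\intd t .
\end{align}

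To finish I would bound the modulus of the right-hand side by the supremum of the integrand, using $|e^{-it(E_k-E_l)}|=1$ together with the weak localisation hypothesis on $\ket{k}$:
\begin{align}
 \Abs{\bra{k}A\ketbra{l}{l}B\ket{k}}
 &\leq \sup_{t\in[0,\infty)} \Abs{\bra{k}[A(t),B]\ket{k}}\\
 &\leq c_\mathrm{mob}\, e^{-\mu \dist},
\end{align}
which is well within the claimed bound (the factor $2$ leaves comfortable room, and is recovered by any slightly lossier symmetrised average). The main obstacle is the frequency-separation step: everything hinges on showing that no frequency from the second sum coincides with the targeted frequency $E_k-E_l$, and this is exactly where \textbf{AIII} is indispensable, since with \textbf{AI} alone one could only control $\bra{k}A\ketbra{l}{l}B\ket{k}$ up to an uncontrolled symmetric partner $\bra{k}B\ketbra{m}{m}A\ket{k}$. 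The only routine points to verify are the convergence of the time average (immediate in finite dimension) and that the constant ($m=k$) term does not interfere, which holds because $E_k-E_l\neq0$ for $l\neq k$.
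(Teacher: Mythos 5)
Your proof is correct, and it takes a genuinely different route from the paper's. The paper isolates the matrix element with a Gaussian energy filter carrying a phase, $f(t)=\sqrt{\alpha/\pi}\,e^{it(E_k-E_l)}e^{-\alpha t^2}$: the filtered observable approximately projects onto the $k\to l$ transition, leaving residual off-target terms bounded by $2^N\bigl(e^{-\gamma^2/(4\alpha)}+e^{-\zeta^2/(4\alpha)}\bigr)$, which are then beaten down to $c_\mathrm{mob}e^{-\mu\dist}$ by choosing $\alpha$ exponentially small in $N$ — this is where the quantitative gaps $\gamma$ (from \textbf{AI}) and $\zeta$ (from \textbf{AIII}) enter, and it is also why the paper picks up the factor $2$. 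You instead extract the Fourier coefficient exactly via the Ces\`aro average $\lim_{T\to\infty}\frac1T\int_0^T e^{-it(E_k-E_l)}\bra{k}[A(t),B]\ket{k}\,\intd t$, which is legitimate since the expression is a finite trigonometric polynomial; your frequency-collision analysis is right (only $m=l$ survives in the first sum by \textbf{AI}, and \textbf{AIII} excludes any survivor $|E_m-E_k|=|E_l-E_k|$, $m\neq l$, in the second sum, while $m=l$ there would force $E_l=E_k$). What your version buys: it needs only the qualitative non-degeneracy statements (strict inequalities, not the sizes of $\gamma,\zeta$), avoids the $2^N$ bookkeeping and the exponentially sharp filter, and yields the stronger constant $c_\mathrm{mob}$ in place of $2c_\mathrm{mob}$; it also dispenses with the normalisation $\bra{k}B\ket{k}=0$, which is anyway immaterial here since $\braket{l}{k}=0$. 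What the paper's version buys is uniformity with the energy-filtering machinery used throughout (the same finite-width filters control locality elsewhere), and a bound whose error terms degrade gracefully if exact non-degeneracy were relaxed to quantitative gap assumptions. One small presentational point: your final remark that the factor $2$ "is recovered by any slightly lossier symmetrised average" is unnecessary — your one-sided average already proves the stated (indeed a stronger) bound.
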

  \begin{proof}
    The proof of this lemma again uses an energy filter with a Gaussian filter function (Def.\ \ref{filter:gaussian})
    and works with $\bra{k} B \ket{k} = 0$, which can always be achieved by using a shifted observable
    $\tilde{B} = B - \bra{k} B \ket{k}$.
    For this proof the filter function will be multiplied by a complex factor $e^{i t (E_k - E_l)}$
    such that its Fourier transform approximately suppresses all transitions except the one from level $k$ to level $l$.
    In a mild variant of the above filter function, we define $f:\rr\rightarrow \rr^+$ and the corresponding filter as
    \begin{align}
      f(t) &:= \sqrt{\frac{\alpha}{\pi}} e^{i t (E_k - E_l)} e^{-\alpha t^2},\\
      I_f(A) &:= \int \intd t f(t)  A(t) ,\\
      \bra{r} I_f(A) \ket{s} &= \bra{r} A \ket{s} e^{-{((E_s - E_r) - (E_l - E_k))^2}/({4 \alpha})}.
    \end{align}
    With this energy filter, we can proceed to prove the lemma.
    \begin{align}
      &\Abs{\bra{k} A \ketbra{l}{l} B \ket{k}} \\
      = &\Abs{\bra{k} I_f(A) B \ket{k} - \sum_{m \neq l} \bra{k} A \ketbra{m}{m} B \ket{k} e^{- {(E_m - E_l)^2}/({4 \alpha})}}\\
      \leq & \Abs{\bra{k} [I_f (A),B] \ket{k}} + \Abs{\bra{k} B I_f(A) \ket{k}} + 2^N e^{-{\gamma^2}/({4 \alpha})}\\
      \leq & c_\mathrm{mob} e^{-d(A,B)} + 2^N \left( e^{-{\gamma^2}/({4 \alpha})} + e^{-{\zeta^2}/({4 \alpha})} \right).
    \end{align}
    Here, $\gamma>0$ refers to the smallest gap and $\zeta>0$ is the smallest degeneracy of the gaps for fixed $E_k$
    as defined in the main text in \textbf{AI} and \textbf{AIII}
    and in the last step we used weak dynamical localisation of the eigenvector $\ket{k}$.
    We can now conclude the proof by choosing $\alpha$ small enough such that
    \begin{align}
      2^N \left( e^{-{\gamma^2}/{(4 \alpha)}} + e^{-{\zeta^2}/{(4 \alpha)}} \right ) \leq c_\mathrm{mob} e^{- \mu \dist},
    \end{align}
    which means that we pick a filter function that is narrow enough in Fourier space.
    Setting $\xi := \min(\gamma, \zeta)$ it would even be enough to choose
    \begin{equation}
    	\frac{\xi^2}{4}\left(\ln \frac{2^{2N +1}}{c_{\rm mob}}\right)^{-1}\geq \alpha,
    \end{equation}
    independently of $\dist$.
  \end{proof}

  With this, we are ready to prove part \textbf{b} of Theorem \ref{main_theorem}.
  \setcounter{thm}{0}
  \begin{thm}[Clustering of correlations of eigenvectors]
    The dynamical properties of a local Hamiltonian imply clustering correlations
    of its eigenvectors in the following way.
    \begin{itemize}
      \item[\textbf{b}]
        If the Hamiltonian has a mobility edge at energy $E_\mathrm{mob}$
        and its spectrum fulfils assumptions \textbf{AI} and \textbf{AIII},
        then all eigenvectors up to that energy $E_\mathrm{mob}$ cluster exponentially
        \begin{align}
          \nonumber
          &\Abs{\bra{k} A B \ket{k} - \bra{k} A\ketbra{k}{k} B \ket{k}}\\
          \leq&
          \left(12 \pi \idos(E_k + \kappa) c_\mathrm{mob} + \ln \frac{\pi \mu \dist e^{4 + 2 \pi}}{\kappa^2}
          \right) \frac{e^{-\mu \dist}}{2 \pi} \;,
        \end{align}
        where $\idos(E)$ is the number of states at energy $E$ and $\kappa$ is a constant
        that can be chosen arbitrarily to optimise the bound.
    \end{itemize}
  \end{thm}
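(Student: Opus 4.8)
The plan is to reduce the correlation function to a sum over eigenstates and then use the high-pass filter of Definition \& Lemma \ref{filter:high-pass} to organise that sum by energy relative to $E_k$. First I would shift $B \mapsto B - \bra{k} B \ket{k}$, which leaves the left-hand side unchanged but enforces $\bra{k} B \ket{k} = 0$, so the quantity to bound becomes simply $\bra{k} A B \ket{k} = \sum_{l} \bra{k} A \ketbra{l}{l} B \ket{k}$, the $l=k$ term now vanishing. The naive route---bounding each summand by Lemma \ref{lem:exp_dec_matrix_elem} and multiplying by the number of terms---fails, since the Hilbert-space dimension is exponential in $N$; the whole point of the proof is to control the exponentially many contributions from eigenvectors that are far from $E_k$ in energy.

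To this end I would insert the high-pass filter $\Gamma_\alpha(A)$ and write
\begin{align}
  \bra{k} A B \ket{k}
  &= \bra{k} [\Gamma_\alpha(A), B] \ket{k}
   + \bra{k} B \, \Gamma_\alpha(A) \ket{k} \nonumber\\
  &\quad + \bra{k} \bigl( A - \Gamma_\alpha(A) \bigr) B \ket{k} .
\end{align}
The first term is directly controlled by the locality estimate for the high-pass filter, which (needing the mobility-edge assumption only for the eigenvector $\ket{k}$) gives $\Abs{\bra{k} [\Gamma_\alpha(A), B] \ket{k}} \leq \tfrac{e^{-\mu \dist}}{2\pi}\bigl(4 + \ln \tfrac{\pi}{4\alpha}\bigr)$. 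For the remaining two terms I would expand in the eigenbasis and use the explicit matrix-element profile \eqref{eq:subsp_cases}: writing $\sigma_l = \Abs{E_l - E_k}$, the filter $\Gamma_\alpha$ keeps a matrix element $\bra{r}\cdot\ket{s}$ essentially unchanged when $E_r < E_s$ and suppresses it by the Gaussian factor $\tfrac12 e^{-\sigma_l^2/(4\alpha)}$ when $E_r > E_s$ (while $A - \Gamma_\alpha(A)$ behaves oppositely). Both terms then reduce, up to these Gaussian correction factors, to sums over eigenvectors $\ket{l}$ with $E_l < E_k$, namely $\sum_{E_l < E_k}\bra{k} B \ketbra{l}{l} A \ket{k}$ and $\sum_{E_l < E_k}\bra{k} A \ketbra{l}{l} B \ket{k}$ respectively.

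I would then split each remaining sum at the energy cut-off $\kappa$. Inside the window $\sigma_l \leq \kappa$ there are at most $\idos(E_k + \kappa)$ eigenvectors, and each contributes at most $2 c_\mathrm{mob} e^{-\mu \dist}$ by Lemma \ref{lem:exp_dec_matrix_elem}; this is the source of the $\idos(E_k+\kappa) c_\mathrm{mob}$ term, and the leading below-$E_k$ contributions (carrying order-one factors) are bounded the same way and absorbed into the same count. Outside the window, $\sigma_l \geq \kappa$, the Gaussian correction factor is uniformly bounded by $\tfrac12 e^{-\kappa^2/(4\alpha)}$, so here I would not estimate term by term but rather pull this common factor out and use the aggregate bound $\sum_l \Abs{\bra{k} A \ket{l}}\,\Abs{\bra{l} B \ket{k}} \leq 1$, which follows from Cauchy--Schwarz together with $\|A\| = \|B\| = 1$. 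This tames all exponentially many far-energy contributions at once, at the cost of the single factor $e^{-\kappa^2/(4\alpha)}$.

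Finally I would optimise the filter sharpness by choosing $\alpha = \kappa^2/(4\mu \dist)$, which turns the tail factor $e^{-\kappa^2/(4\alpha)}$ into $e^{-\mu \dist}$ while the locality term contributes $\ln\tfrac{\pi}{4\alpha} = \ln\tfrac{\pi \mu \dist}{\kappa^2}$; collecting the additive constants into the logarithm produces the stated $\ln\bigl(\pi \mu \dist\, e^{4+2\pi}/\kappa^2\bigr)$ and the overall prefactor $\tfrac{1}{2\pi}$, with the several $\idos(E_k+\kappa)$ contributions combining into $12\pi\, \idos(E_k+\kappa) c_\mathrm{mob}$. The main obstacle is precisely this balancing act: the high-pass filter separates the two energy orderings only up to a Gaussian transition region whose width is set by $\alpha$, so sharpening it to suppress the far-energy tail ($\alpha \to 0$) simultaneously degrades its locality logarithmically. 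The cut-off $\kappa$ is the free parameter mediating between the ``near'' states (counted by $\idos(E_k+\kappa)$) and the exponentially many ``far'' states (controlled by the $\ell_1$ bound), and bringing both under the same $e^{-\mu \dist}$ envelope is the crux of the argument.
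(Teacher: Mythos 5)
Your proposal is correct and follows essentially the same route as the paper: the identical high-pass-filter decomposition (your three terms with the subsequent split at the cut-off $\kappa$ are exactly the paper's five terms obtained by inserting $P+P^\perp$ with $P$ the projector onto energies below $E_k+\kappa$), the same use of Lemma \ref{lem:exp_dec_matrix_elem} for the at most $\idos(E_k+\kappa)$ near states, the same Gaussian suppression plus norm estimate for the far states, the locality bound for the commutator, and the same choice $\alpha=\kappa^2/(4\mu\dist)$.
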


  \begin{proof}[Proof of Theorem \ref{main_theorem} part \textbf{b}]
    The proof runs along and builds upon the lines of thought of both Ref.\ \cite{HastingsKoma06}
    and of \cite{HamzaSimsStolz12}, and generalises both.
    The basic idea is again to start from the correlation function
    and to transform it into an expression depending on the commutator.
    For this, we fix a constant $\kappa>0$ and define
    \begin{align}
      P=\sum_{E_l\leq E_k +\kappa}\ketbra{l}{l}
    \end{align}
    to be the projector onto the subspace of energies smaller than $E_k + \kappa$ and write $P^\perp$ for $\idty-P$.
    Decomposing the identity with respect to $P$ and $P^\perp$ and using a high-pass filter
    $\Gamma_{\alpha}$ for a suitable $\alpha>0$ (Def.\ \ref{filter:high-pass}),
    we separate the correlator into terms
    \begin{eqnarray} \label{eq:subsp_decomp}
      \bra{k} A B \ket{k} &=& \bra{k} (A-\Gamma_{\alpha}(A))P B \ket{k}\\
      &+& \bra{k} (A-\Gamma_{\alpha}(A)) P^\perp B \ket{k}\\
      &+& \bra{k} B P \Gamma_{\alpha}(A) \ket{k}
      + \bra{k} BP^\perp\Gamma_{\alpha}(A) \ket{k}\\
      &+& \bra{k} [\Gamma_{\alpha}(A),B] \ket{k}\;.
    \end{eqnarray}
    Lets consider the first term on the right-hand side of Eq.\ \eqref{eq:subsp_decomp} that contains $P$
    and expand the projector in the eigenbasis of $H$, which gives
    \begin{align}
      &|\bra{k} (A-\Gamma_{\alpha}(A))P B \ket{k}|\\
      \leq& \sum_{E_l\leq E_k+\kappa} |\bra{k} (A-\Gamma_{\alpha}(A)) \ketbra{l}{l} B \ket{k}|\\
      \leq& \; 2 \sum_{E_l\leq E_k+\kappa} |\bra{k} A \ketbra{l}{l} B \ket{k}|.
    \end{align}
    Here, we have used that all matrix elements decrease under a high-pass filter (Lemma \ref{filter:high-pass})
    for all $\alpha>0$.
    As before, we will set $\bra{k} B \ket{k} = 0$ w.l.o.g.\
    and then proceed by bounding all the terms in the sum individually.
    Using Lemma \ref{lem:exp_dec_matrix_elem} yields
    \begin{align}
      |\bra{k} (A-\Gamma_{\alpha}(A))P B \ket{k}| \leq 4 \idos(E_k + \kappa) c_\mathrm{mob} e^{- \mu \dist},
    \end{align}
    where $\idos(E_k + \kappa)$ simply is the number of eigenvectors contained in $P$.
    The other term in Eq.\ (\ref{eq:subsp_decomp}) containing $P$ can be bounded analogously, yielding
    \begin{align}
      |\bra{k} B P \Gamma_{\alpha}(A) \ket{k}| \leq 2 \idos(E_k + \kappa) c_\mathrm{mob} e^{- \mu \dist}.
    \end{align}
    The terms containing $P^\perp$ in Eq.\ (\ref{eq:subsp_decomp}) are bounded using the explicit form
    of the matrix elements of a high-pass filter (Lemma \ref{filter:high-pass}), which delivers
    \begin{align}
      \bra{k} B P^\perp \Gamma_{\alpha}(A) \ket{k}
      \leq \sum_{E_l > E_k+\kappa} \bra{k} B \ketbra{m}{m} D A \ket{k},
    \end{align}
    where $D$ is a diagonal matrix whose entries are bounded by $e^{-{\kappa^2}/{(4 \alpha)}}/2$.
    A simple norm estimate concludes the estimation of this term
    \begin{align}
      \bra{k} B P^\perp \Gamma_{\alpha}(A) \ket{k}  \leq \|D\|
      \leq \frac{e^{-{\kappa^2}/{(4 \alpha)}}}{2} .
    \end{align}
    The other term containing $P^\perp$ can be estimated in the same way
    \begin{align}
    \bra{k} \left( A - \Gamma_{\alpha}(A) \right) P^\perp \Gamma_{\alpha}(A) \ket{k}
      \leq \frac{e^{-{\kappa^2}/{(4 \alpha)}}}{2} .
    \end{align}
    The commutator term in Eq.\ (\ref{eq:subsp_decomp}) is bounded by using locality of a high-pass filter
    (Lemma \ref{filter:high-pass}). We are still free to choose a value for $\alpha$ in the high-pass filter.
    Taking
    \begin{equation}
    	\alpha = \frac{\kappa^2}{4 \mu \dist}
    \end{equation}
    gives
    \begin{eqnarray}
      \bra{k} [\Gamma_{\alpha}(A),B] \ket{k} &\leq&
      \bigg( 12 \pi \idos(E_k + \kappa) c_\mathrm{mob} \\
        &+& 2 \pi + 4 + \ln \frac{\pi \mu \dist}{\kappa^2}
      \bigg) \frac{e^{-\mu \dist}}{2 \pi} \;.
    \end{eqnarray}
    This concludes the proof.
  \end{proof}

\subsection{Proof of Corollary \ref{mps_theorem}}
  \label{mps_proof}
  It follows from the above results that in the case of 1D systems,
  the corresponding eigenstates satisfy an area law for a suitable
  entanglement entropy, which in turn implies that they can be
  well approximated by a matrix-product state of a low bond dimension.
  We will bound the bond dimension by using Theorem 1 from Ref.\ \cite{Brandao_Clustering}.
  For convenience, we repeat our Corollary \ref{mps_theorem} from the main text.
  \setcounter{thm}{1}
  \begin{co}[Area laws and matrix-product states]
    An eigenvector $\ket{k}$ of a localising Hamiltonian can be written as an MPS with fidelity
    $|\braket{k}{\mathrm{MPS}}|\geq 1 - \epsilon$, where the bond dimension for a sufficiently large system size
    is given as follows.
    \begin{itemize}
      \item[\textbf{a}]
        If the Hamiltonian shows strong dynamical localisation
        and its spectrum fulfils assumptions \textbf{AI} and \textbf{AII},
        then the statement holds for all eigenvectors $\ket{k}$
        and the approximation has a bond dimension
        \begin{align}
          D = C \left ( {N}/{\epsilon} \right )^{\frac{16}{\mu \log_2 e}},
        \end{align}
        for some constant $C>0$.
      \item[\textbf{b}]
        If the Hamiltonian has a mobility edge at energy $E_{\mathrm{mob}}$,
        and its spectrum fulfils assumptions \textbf{AI} and \textbf{AIII},
        then the statement holds for all eigenvectors below this energy $E_{\mathrm{mob}}$
        and the bond dimension is given by
        \begin{align}
          D = \mathrm{poly} \left ( \idos( E_k + \kappa), N \right ) \; ,
        \end{align}
        for any fixed $\kappa$ which enters in the precise form of the polynomial.
    \end{itemize}
  \end{co}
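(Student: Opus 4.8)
The plan is to obtain the area law and the MPS approximation by chaining Theorem \ref{main_theorem} with two known structural results for one-dimensional systems: the area law from exponential decay of correlations of Ref.\ \cite{Brandao_Clustering}, and the faithful matrix-product state approximation of states obeying a R\'enyi-entropy area law from Ref.\ \cite{MPS_faithfully}. Throughout, I would fix an eigenvector $\ket{k}$ and an arbitrary bipartite cut of the chain into a left block and a right block.

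First I would translate the clustering bound of Theorem \ref{main_theorem} into a statement about correlations across such a cut. Applying part \textbf{a} to observables supported to the left and to the right of the cut shows that $\ket{k}$ has correlations that decay exponentially in the block separation, with a decay rate set by $\mu/2$ and a size-independent prefactor $4 c_\mathrm{loc}$; this is precisely exponential clustering with a finite correlation length $\xi \sim 2/\mu$. In the mobility-edge case I would instead use part \textbf{b}: the decay rate is still $\mu$, but the prefactor now carries the factor $\idos(E_k+\kappa)$, so the effective correlation ``strength'' grows with the number of low-lying states.

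I would then feed this exponential clustering into Theorem 1 of Ref.\ \cite{Brandao_Clustering}, which converts a finite correlation length into a bound on a R\'enyi entanglement entropy $S_\alpha$ (with $\alpha<1$) across the cut that is \emph{independent of the block sizes} --- an area law in one dimension. A bounded $S_\alpha$ with $\alpha<1$ in turn controls the tail of the Schmidt spectrum, so Ref.\ \cite{MPS_faithfully} produces an MPS whose per-cut truncation error is governed by the discarded Schmidt weight at bond dimension $D$. Summing these errors over the $N-1$ cuts and requiring the total infidelity to stay below $\epsilon$ forces $D$ to scale like a power of $N/\epsilon$, with the exponent fixed by $\xi$ and hence by $\mu$; carefully propagating the constants through \cite{Brandao_Clustering} and \cite{MPS_faithfully} reproduces the exponent $16/(\mu \log_2 e)$ of part \textbf{a}. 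In part \textbf{b}, the $\idos(E_k+\kappa)$ dependence of the entropy bound passes through to the bond dimension, giving $D = \mathrm{poly}(\idos(E_k+\kappa), N)$ for any fixed $\kappa$.

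The main obstacle is the bookkeeping of constants rather than any new idea: I expect the delicate step to be tracking the correlation length and the prefactor through the two black-box theorems to obtain the stated exponents, and in particular converting the global fidelity requirement $|\braket{k}{\mathrm{MPS}}| \geq 1-\epsilon$ into a per-cut Schmidt-truncation criterion in a way that yields the polynomial-in-$N/\epsilon$ bond dimension claimed, rather than a bound that degrades exponentially in the number of cuts.
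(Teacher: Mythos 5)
Your proposal follows essentially the same route as the paper's proof: Theorem \ref{main_theorem} is recast as exponential clustering with correlation length $\xi = 2/(\mu\log_2 e)$ (absorbing the prefactor into a minimal distance $l_0$), Theorem 1 of Ref.\ \cite{Brandao_Clustering} turns this into a cut-independent area law, and a per-cut Schmidt truncation whose errors accumulate only linearly over the $N-1$ cuts gives $D \sim (N/\epsilon)^{8\xi}$ with $8\xi = 16/(\mu\log_2 e)$. The only cosmetic difference is that the paper phrases the area law via the smooth max entropy $H^{\delta}_{\mathrm{max}}$ and Lemma 14 of Ref.\ \cite{Brandao_Clustering} rather than a R\'enyi entropy $S_\alpha$ with $\alpha<1$, and the constant-tracking you flag as the delicate step is precisely what the paper carries out explicitly.
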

  \begin{proof}
  \begin{figure}
    \includegraphics[]{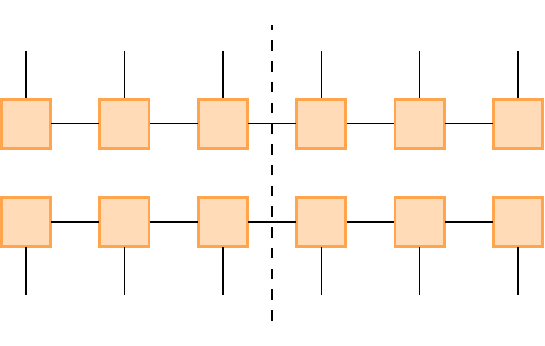}

    \vspace{-.4cm}

    \includegraphics[]{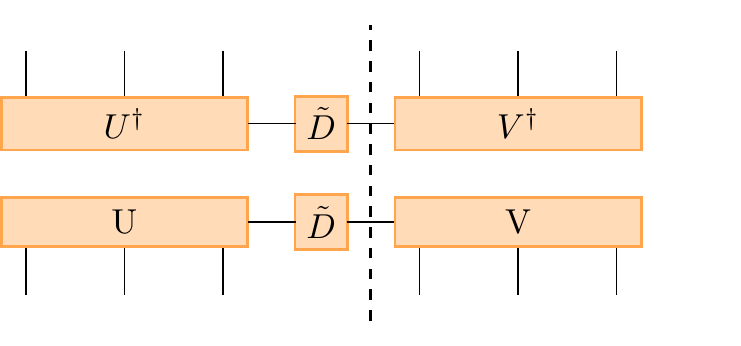}
    \caption{
      MPS description of an eigenstate $\ketbra{E_k}{E_k}$.
      Theorem 1 in Ref.\ \cite{Brandao_Clustering} allows to bound the bond dimension across cuts
      by providing an upper bound to the smooth max entropy.
      The reduced state of the suitably normalised matrix product state is $\rho = U \tilde{D}^2 U^\dagger$.
    }
    \label{fig_mps}
  \end{figure}
  As a first step of the proof, we reformulate our clustering of correlation results (Theorem \ref{main_theorem}) in
  terms of a correlation length according to Ref.\ \cite{Brandao_Clustering}.
  For this, we need a decay of the correlator of the form $2^{-\dist / \xi} \; \forall \dist > l_0$ and
  start by rewriting our bound as
  \begin{align}
    \label{eq_mps_clustering}
    |\bra{k} A B \ket{k} - \bra{k} A \ketbra{k}{k} B \ket{k}|
    &\leq C e^{-\mu \dist}\\
    &= C 2^{- \mu \log_2 e \; \dist}.
  \end{align}
  Next we split the decaying term and use one part to get rid of the constant prefactor $C$
  and the other to preserve an exponential decay with a correlation length $\xi := {2}/({\mu \log_2 e})$.
  This yields
  \begin{align}
    |\bra{k} A B \ket{k} - \bra{k} A \ketbra{k}{k} B \ket{k}| &\leq 2^{- \dist / \xi} \\
    \forall \; \dist &\geq l_0,
  \end{align}
  where $l_0 = {\xi}/({\log_2 C})$.
  Following Ref.\ \cite{Brandao_Clustering}, we will use the exponential clustering to obtain a description
  of the eigenvector $\ket{k}$ in terms of MPS (see also Fig.\ \ref{fig_mps}).
  This is based on the so called smooth max entropy, which is defined as
  \begin{align}
    H_\mathrm{max}^\delta(\rho_X) := \min_{\tilde{\rho}_X \in \mathcal{B_\delta}(\rho_X)}
    \log_2 (\mathrm{rank} (\tilde{\rho}_X)) \; ,
  \end{align}
  with
  \begin{equation}
  	B_\delta(\rho_X) := \{ \tilde{\rho}_X : D(\rho_X, \tilde{\rho}_X) < \delta \}
  \end{equation}
  being a ball around $\rho_X$ (see Appendix A of Ref.\ \cite{Brandao_Clustering} for details).
  Here $D(A,B) :=  \| A - B \|_1/2$ denotes the trace distance.
  According to Theorem 1 in Ref.\ \cite{Brandao_Clustering}, we can obtain an upper bound
  for the smooth max entropy for any bipartite cut, as long as the system size is larger than
  $N \geq  {C_\mathrm{cut} l_0}/{\xi}$ with a constant $C_\mathrm{cut}>0$.
  Picking the approximation parameter in the smooth max entropy to be $\delta(l) = 2^{- {l}/{(8 \xi)}}$,
  then Ref.\ \cite{Brandao_Clustering} provides a bound of the form
  \begin{align}
    \label{eq_Hmax}
    H_\mathrm{max}^{\delta(l)} \leq c' l_0 \xi^{c \xi} + l,
  \end{align}
  for all $l \geq 8 \xi$ with constants $c, c'>0$.
  In the following, we will make Corollary 3 in Ref.\ \cite{Brandao_Clustering} explicit, by deriving
  concrete bounds on the bond dimension of the matrix-product state chosen to approximate $\ket{k}$.
  Naturally, the first step for this is to express $\ket{k}$ as a matrix-product state vector with, a priori,
  exponentially large bond dimension (Fig.\ \ref{fig_mps}).
  For simplicity, we restrict the proof here to a linear system with open boundary conditions, but
  it can equivalently be reformulated for a periodic system on a ring \cite{Brandao_Clustering}.
  The bond dimension of the matrix-product state will be bounded by truncating at each cut explicitly.
  For this, we start at one end of the chain and look at each cut separately.
  In each step, we apply a singular value decomposition (see also Fig.\ \ref{fig_mps}). Since we are truncating spectral values
  in each step, the positive operators will no longer be normalised to unit trace and are states only up to normalisation.
  This results in a reduction on the left side of the cut of $\rho = U \tilde{D}^2 U^\dagger$, with $U$ being unitary and $\tilde D$ diagonal.
  The goal is now to truncate the diagonal matrix $\tilde{D}$ to a fixed bond dimension $D$ while creating only a
  small discarded weight \cite{Schollwock201196}.
  Following Lemma 14 in Appendix B of Ref.\ \cite{Brandao_Clustering}, for any $\nu>0$,
  we can choose the bond dimension as
  \begin{align}
    D = 2^{H_\text{max}^{\nu}(\rho)} ,
  \end{align}
  and create a discarded weight
  \begin{align}
    \sum_{D+1}^{2^N} \lambda_k \leq 3 \nu,
  \end{align}
  where $\lambda_k$ are the eigenvalues of $\rho$ or equivalently of $\tilde{D}^2$.
  Each time we create discarded weight, the fidelity with our original pure quantum
  state potentially is reduced by the discarded weight \cite{MPS_faithfully}, upper bounded by $3 \nu$.
  The truncation will result in a subnormalised state. Renormalising will, however, only increase the fidelity,
  allowing us to obtain a normalised MPS approximation.
  Thus, in total, the fidelity of our MPS approximation with bond dimension
  $D=2^{H_\text{max}^{\nu}(\rho)}$ is bounded by
  \begin{align}
    |\braket{\psi}{\text{MPS}}| \leq 1 - N 3 \nu,
  \end{align}
  where $N$ is the size of the linear 1D system.
  For a fixed global error $\epsilon>0$, we therefore obtain an allowed local error
  \begin{align}
    \nu = \frac{\epsilon}{3 N} \; .
  \end{align}
  We now fix $\nu$ to take the role of $\delta(l)$ in \eqref{eq_Hmax}.
  Plugging this bound into Eq.\ \eqref{eq_Hmax} and using $\delta(l) = 2^{- {l}/{(8 \xi)}}$, we obtain
  \begin{align}
    H_\mathrm{max}^{\frac{\epsilon}{3 N}} \leq c' l_0 \xi^{c \xi} + 8 \xi \log_2 \frac{3 N}{\epsilon} \; ,
  \end{align}
  and thus obtain an approximation with
  \begin{align}
    D = 2^{ c' l_0 \xi^{c \xi}} \left ( \frac{3 N}{\epsilon} \right )^{8 \xi}  .
  \end{align}
  Coming back to our original clustering assumption in Eq.\ \eqref{eq_mps_clustering}, this yields
  \begin{align}
    D = C^{c' \xi^{c \xi + 1}} \left ( \frac{3 N}{\epsilon} \right )^{8 \xi } ,
  \end{align}
  with $\xi = {2}/({\mu \log_2 e})$.
  Thus, we finally have a polynomial scaling in the constant $C>0$ as well as the in the system size and in the
  inverse error of the approximation ${1}/{\epsilon}$.
  The rest of the proof follows directly by inserting the constants from Theorem \ref{main_theorem}.
  \end{proof}

\subsection{Absence of thermalisation}
\label{absence}

In this subsection we make the link between dynamical localisation and the absence of thermalisation more explicit.
On intuitive grounds, such a connection is much expected: if no transport happens,
then the system will retain too much memory of the initial state for all times.
This initial state dependence, then necessarily leads to an absence of thermalisation.
Here, we will formulate the link between strong dynamical localisation in the sense of Definition
\ref{df_dynamical_localisation} and the absence of thermalisation along the lines of Ref.\ \cite{Integrable}
and consistent with Ref.\ \cite{HuseReview}.

To state this concisely, denote with $\rho$ a Gibbs state of
some inverse temperature $\beta>0$. We denote with $A$ both an observable as well as the support of $A$ on the lattice. $B$ is the
region of distance no more than $l$ from $A$.
The following lemma shows that under the condition of strong localisation, there is an initial state $\rho_0$ different from $\rho$ on $A$ only, such that
$\rho_0(t)$ will still be locally statistically distinguishable from $\rho$ for all times $t>0$ on the region $B$, in stark
contrast to a presumed thermalisation. That is to say, in the sense of Ref.\ \cite{Integrable} the system retains too much of a memory of the initial condition to thermalise.
Hence, the system must also violate the ETH.
\begin{lm}[Absence of thermalisation] Let $\rho$ be a Gibbs state. Then there exists a state $\rho_0$, different from $\rho$ on $A$ only, such that
the trace distance of the time evolved state $\rho_0(t)$ and the Gibbs state $\rho$ reduced to $B$
is lower bounded by
\begin{equation}
	\|\rho_0(t)_B - \rho_B\|_1\geq x- c_\mathrm{loc} e^{- \mu l},
\end{equation}
with $x:= 2- 2\lambda_{\rm min}(\rho_A)$, where $\rho_A$ and $\rho_B$ are the reductions to $A$ and $B$, respectively.
\end{lm}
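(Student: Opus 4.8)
The plan is to exhibit a concrete initial state $\rho_0$ that is maximally distinguishable from $\rho$ on $A$ at time zero, and then to use strong dynamical localisation to show that this distinguishability cannot leak out of the enlarged region $B$, so that it survives reduction to $B$ for all times. First I would fix the construction of $\rho_0$. Writing $\rho_A = \mathrm{tr}_{A^c}\rho$ for the Gibbs reduction to $A$, let $\ket{m}$ be an eigenvector of $\rho_A$ of smallest eigenvalue $\lambda_{\min}(\rho_A)$, set $\sigma_A := \ketbra{m}{m}$, and let $\rho_0 := \sigma_A \otimes \rho_{A^c}$ with $\rho_{A^c}=\mathrm{tr}_A\rho$. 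This $\rho_0$ is obtained from $\rho$ by the replacement channel acting only on $A$ and agrees with $\rho$ on the marginal of $A^c$. A direct spectral computation gives $\|\sigma_A - \rho_A\|_1 = 2-2\lambda_{\min}(\rho_A)=x$, so the sign operator $O := \mathrm{sgn}(\sigma_A-\rho_A)$, which is Hermitian, supported on $A$, and satisfies $\|O\|\le 1$, is the optimal time-zero distinguisher: $\mathrm{tr}(O(\rho_0-\rho)) = \mathrm{tr}_A(O(\sigma_A-\rho_A)) = x$.

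Next I would pass to the dual characterisation of trace distance: for any Hermitian $M$ supported on $B$ with $\|M\|\le 1$,
\begin{align}
  \|\rho_0(t)_B - \rho_B\|_1 \geq \mathrm{tr}\bigl(M(\rho_0(t)_B-\rho_B)\bigr) = \mathrm{tr}\bigl(M(t)(\rho_0-\rho)\bigr),
\end{align}
where the last equality uses that $M$ is supported on $B$ (so the reduction is immaterial) together with stationarity of the Gibbs state, $e^{-itH}\rho e^{itH}=\rho$, and the shorthand $M(t)=e^{itH}Me^{-itH}$. The key observation is that taking $M=O$ directly fails, since $O(t)$ is spread across $B$ and $\mathrm{tr}(O(t)(\rho_0-\rho))$ need not be large. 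Instead I would re-focus the probe onto $A$ by evolving $O$ backwards under the truncated Hamiltonian, setting
\begin{align}
  M := e^{-itH_A^l}\,O\,e^{itH_A^l}.
\end{align}
Because $O$ is supported on $A\subseteq B$ and $H_A^l$ is supported on $B$, this $M$ is supported on $B$, and $\|M\|=\|O\|\le 1$ as a unitary conjugate of $O$. The point of this choice is that $M(t)$ then equals $O$ up to a localisation error.

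The final step is the norm estimate $\|M(t)-O\|\le c_\mathrm{loc}e^{-\mu l}$, where strong dynamical localisation enters. Conjugating by $e^{-itH}$, which preserves the operator norm, reduces this to
\begin{align}
  \|M(t)-O\| = \left\| e^{-itH_A^l}\,O\,e^{itH_A^l} - e^{-itH}\,O\,e^{itH} \right\|,
\end{align}
i.e.\ the difference between the truncated and the full backward evolution of $O$, which is exactly the quantity controlled by Definition \ref{df_dynamical_localisation} (applied at time $-t$; the right-hand side of that bound is $t$-independent). Combining this with H\"older's inequality, $|\mathrm{tr}((M(t)-O)(\rho_0-\rho))|\le \|M(t)-O\|\,\|\rho_0-\rho\|_1$, yields
\begin{align}
  \|\rho_0(t)_B-\rho_B\|_1 \geq x - c_\mathrm{loc}e^{-\mu l}\,\|\rho_0-\rho\|_1 \geq x - c_\mathrm{loc}e^{-\mu l},
\end{align}
after absorbing the harmless factor $\|\rho_0-\rho\|_1\le 2$ into the constant. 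I expect the main obstacle to be precisely the construction of the test observable on $B$: one must recognise that the probe has to be propagated \emph{backward} under $H_A^l$ so that its true Heisenberg evolution re-collapses onto the support of the time-zero difference, and then verify both that this observable still lives on $B$ with norm at most one and that the residual error is genuinely the strong-localisation quantity for backward time. The remaining ingredients — the spectral identity for $x$ and the stationarity rewriting — are routine.
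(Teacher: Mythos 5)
Your proposal is correct and is essentially the paper's own argument read in the reverse direction: the paper starts from $x=|\tr(A(\rho_0-\rho))|$, uses stationarity of the Gibbs state and a triangle inequality to split off the term $\tr\bigl((e^{-itH_A^l}Ae^{itH_A^l})(\rho_0(t)-\rho)\bigr)$ — which is bounded by $\|\rho_0(t)_B-\rho_B\|_1$ precisely because that conjugated observable is supported on $B$ with norm at most one — while you package the same observable as a dual witness for the trace distance; the key idea (back-propagating the optimal $A$-distinguisher under the truncated Hamiltonian and invoking strong dynamical localisation at time $-t$) is identical. Your explicit choice $\xi_A=\ketbra{m}{m}$ and your honest bookkeeping of the factor $\|\rho_0-\rho\|_1\leq 2$ (which the paper silently drops) are fine refinements, not a different route.
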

\begin{proof}
For the distinguished region of the lattice, refer to $\rho_A$ as its reduced state and $\rho_{\backslash A}$ its complement.
Take as initial state $\rho_0=\rho_{\backslash A}\otimes \xi_A$, where $\xi_A$ is chosen such that
\begin{equation}
	\|\rho_A-\xi_A\|_1 =  2- 2\lambda_{\rm min}(\rho_A) = x.
\end{equation}
Using that the trace norm is a unitarily invariant norm, it is clear that such a state $\xi_A$ can always be found.
Denote with $A$ the observable satisfying $\|A\|\leq 1$ that saturates the trace distance. Then
\begin{eqnarray}
	x&=& \|\rho_A-\xi_A\|_1 = | {\rm tr}(A (\rho_0 -\rho))|\nonumber\\
  &=& | {\rm tr}(A(-t) (\rho_0(t) -\rho))|,
	\end{eqnarray}	
  where we used that the Gibbs state is invariant under the time evolution of its Hamiltonian.
	Applying the triangle inequality, the condition of strong localisation, and the fact that
	$e^{itH_A^l}A e^{-itH_A^l}$ is an observable supported on $B$ with operator norm upper bounded by unity,
	we get
	\begin{eqnarray}
	x&\leq& | {\rm tr}((e^{-itH}A e^{itH} - e^{-itH_A^l}A e^{itH_A^l} ) (\rho_0(t) -\rho))|\nonumber\\
	&+&
	  | {\rm tr}((e^{-itH_A^l}A e^{itH_A^l} ) (\rho_0(t) -\rho))|\nonumber\\
	 &\leq& \| (e^{-itH}A e^{itH} - e^{-itH_A^l}A e^{itH_A^l} ) \| \nonumber\\
	 &+& \|\rho_0(t)_B - \rho_B\|_1\nonumber\\
	 &\leq& c_\mathrm{loc} e^{- \mu l}+ \|\rho_0(t)_B - \rho_B\|_1,
\end{eqnarray}	
for $l$ suitably chosen. From this the statement follows.
\end{proof}
\end{document}